\providecommand*{\unit}[1]{\,\ifmmode
\mathrm{\,#1}\else\textup{#1}\fi}
\newcommand{\daga}[1]{{#1}^{\dagger}}
\newcommand{\normt}[1]{\norm{#1}_1}
\newcommand{\Md}[1]{M_{#1}(\mathbb{C})}
\newcommand{\family}[1]{\{{#1}\}_{t\ge0}}
\newcommand{\wt}[1]{\widetilde{#1}}
\newcommand{\mc}[1]{\mathcal{#1}}
\newcommand{\bs}[1]{\boldsymbol{#1}}
\newtheorem{theorem}{Theorem}
\newtheorem{proposition}{Proposition}
\newtheorem{lemma}{Lemma}
\newtheoremstyle{boldremark}
{\dimexpr\topsep/2\relax}
{\dimexpr\topsep/2\relax}
{}          % body font
{}          % indent amount
{\bfseries}
{.}
{.5em}
{}
\newtheorem*{theorem*}{Theorem}
\newtheorem{remark}{Remark}
\theoremstyle{plain}
\newtheorem{example}{Example}
\theoremstyle{plain}
\newtheorem{corollary}{Corollary}
\begin{document}

\title{Quantum vs. classical $P$-divisibility}

\author{Fabio Benatti}
\email{benatti@ts.infn.it}
\affiliation{Dipartimento di Fisica, Universit\`{a} di Trieste, 34151 Trieste, Italy}
\affiliation{Istituto Nazionale di Fisica Nucleare, Sezione di Trieste, 34151 Trieste, Italy}

\author{Dariusz Chru\ifmmode \acute{s}\else \'{s}\fi{}ci\ifmmode \acute{n}\else \'{n}\fi{}ski}
\affiliation{Institute of Physics, Faculty of Physics, Astronomy and Informatics, Nicolaus Copernicus University, Grudziadzka 5/7, 87-100 Toru\'{n}, Poland}

\author{Giovanni Nichele}
\affiliation{Dipartimento di Fisica, Universit\`{a} di Trieste, 34151 Trieste, Italy}
\affiliation{Istituto Nazionale di Fisica Nucleare, Sezione di Trieste, 34151 Trieste, Italy}

\begin{abstract}
$P$-divisibility is a central concept in both classical and quantum non-Markovian processes; in particular, it is strictly related to the notion of information backflow.
When restricted to a fixed commutative algebra generated by a complete set of orthogonal projections, any quantum dynamics naturally provides a classical stochastic process. It is indeed well known that a quantum generator gives rise to a $P$-divisible quantum dynamics if and only if all its possible classical reductions give rise to divisible classical stochastic processes. Yet, this property does not hold if one operates a classical reduction of the quantum dynamical maps instead of their generators:  as an  example, for a unitary dynamics, $P$-divisibility of its classical reduction is inevitably lost and the latter thus exhibits information backflow. Instead, for some important classes of purely dissipative qubit evolutions, quantum $P$-divisibility always implies classical $P$-divisibility and  therefore
excludes information backflow both in the quantum
and classical scenarios. On the contrary, for a wide class of orthogonally covariant qubit dynamics, we show that loss of classical $P$-divisibility originates from the classical reduction of a purely dissipative $P$-divisible quantum dynamics as in the unitary case. Moreover, such an effect can be interpreted in terms of information backflow due to the coherences developed by the quantumly evolving classical state.
\end{abstract}

\maketitle

\section{Introduction}

In the following, we shall be concerned with $n$-level open quantum systems interacting with their environment in a way that cannot be approximated by
means of dynamical semigroups~\cite{AlickiLendi,BreuerPetruccione,RivasHuelga}. The presence of memory effects is embodied
by one-parameter families of completely positive and trace-preserving ($CPTP$) maps
$\{\Lambda_t\}_{t\geq 0}$ that act on the convex space $\mathcal{S}(M_n(\mathbb{C}))$ of their positive and unit trace density matrices  $\rho\in M_n(\mathbb{C})$ such that $\Lambda_{t+s}\neq \Lambda_t\Lambda_s$.
We shall assume the dynamics $\rho\mapsto\rho(t)=\Lambda_t[\rho]$ to consist of algebraically invertible maps generated by  time-dependent generators $\mathcal{L}_t$ on $\mathcal{S}(M_n(\mathbb{C}))$, namely by master equations of the form
\begin{equation}
	\label{eq0}
	\dot\Lambda_t=\mathcal{L}_t\,\Lambda_t\ , \quad \mathcal{L}_t:=\dot\Lambda_t\Lambda_t^{-1}\ ,
\end{equation}
which are solved by the time-ordered exponentials
\begin{align}
	\Lambda_t&=\mathcal{T}_\leftarrow e^{\int_0^t\dd{s}\,\mathcal{L}_s} \nonumber\\
    &=\sum_{k=0}^\infty\int_0^t\dd s_1\int_0^{s_2}\hskip-.1cm\dd s_2\cdots\int_0^{s_{k-1}}\hskip -.5cm\dd s_k\,\mathcal{L}_{s_1}\cdots\mathcal{L}_{s_k}\ .\label{timeordering}
\end{align}
Because of the invertibility of the dynamics, the intertwining maps $\Lambda_{t,s}$ connecting time $s$ to time $t\geq s\geq 0$,
$\Lambda_t=\Lambda_{t,s}\Lambda_s$, are given by
\begin{equation}
	\label{intertwiners}
\Lambda_{t,s}:=\Lambda_t^{\phantom{.}}\,\Lambda_s^{-1}\ ,\quad t\geq s\geq 0\ .
\end{equation}
Given $n^2$ Hilbert-Schmidt orthonormal matrices $F_k\in M_n(\mathbb{C})$ such that ${\rm Tr}(F^\dag_jF_k)=\delta_{jk}$,
with $F_0=\mathds{1}/\sqrt{n}$, the generator can always be written as
\begin{equation}
	\label{GKSL}
	\mathcal{L}_t[\rho]=-i[H(t),\rho]+\mathcal{D}_t[\rho]\ ,
\end{equation}
with $H(t)$ being a time-dependent Hamiltonian and dissipator
\begin{equation}\label{dissipator}
	\mathcal{D}_t[\rho]=\sum_{i,j=1}^{n^2-1} K_{ij}(t)\left(F_i\rho F^\dag_j-\frac{1}{2}\{F^\dag_jF_i,\rho\}\right)\,,
\end{equation}
with Hermitian, time-dependent $(n^2-1)\times (n^2-1)$ Kossakowski matrix $K(t)=[K_{ij}(t)]=K^\dag(t)$. According to the GKSL-theorem~\cite{GoriniKossSud,LindbladTh}, for time-independent generators, such that $\dot \Lambda_t=\mathcal{L}\Lambda_t$, the dynamics becomes a semi-group,
\begin{equation}
	\label{quantumsemigroup}
	\Lambda_t={\rm e}^{t\mathcal{L}}\ ,\quad \Lambda_{t+s}=\Lambda_{t}\Lambda_s, \qquad \Lambda_{t=0}=\mathrm{id}\quad \forall t,s\ge0\ ,
\end{equation}
and the complete positivity of the maps $\Lambda_t$ is equivalent to the positive semi-definiteness of the time-independent Kossakowski matrix:
$K(t)=K=[K_{jk}]\geq 0$.
Instead, in the time-dependent case there are no general constraints on $K(t)$  characterizing the complete positivity of the generated $\Lambda_t$.
What certainly holds is that $K(t)\geq 0$ is equivalent to the complete positivity of all the intertwiners $\Lambda_{t,s}$ \cite{RivasHuelga,ChrusReview22} (and thus implies the complete positivity  of $\Lambda_t$). In this case, the dynamics is called $CP$-divisible.
However, the complete positivity of $\Lambda_t$ does not require $K(t)\geq 0$; indeed, completely positive $\Lambda_t$ can very well be only $P$-divisible, namely with intertwiners $\Lambda_{t,s}$ that are only positive and not completely positive.
Even when not semi-groups, $P$-divisible families of $CPTP$ maps may be considered to be Markovian as they do not give rise to the phenomenon of backflow of information, which we will be referring to as $BFI$ in the following,  from the environment to the open quantum system immersed in it~\cite{BLP,ColloquiumBreuer,FBGN2023}.
Indeed, Markovianity is nowadays identified with information being lost from an open system to its environment, leading to increasing lack of distinguishability of generic pairs of open system states in the course of time.

The classical analogue of a  one-parameter family of quantum dynamical maps
is a one-parameter family of $n\times n$ stochastic matrices $\{T(t)\}_{t\ge0}$, with entries $T_{ij}(t)$ such that $T_{ij}(t)\ge0$, $\sum_i T_{ij}(t)=1$.
These matrices transform probability vectors $\bs{p}=\{p_i\}_{i=1}^n$, $p_i\geq 0$, $\sum_{j=1}^np_j=1$,  into themselves, $\bs{p}(t)=T(t)\bs{p}$.

In the following we shall study the properties inherited by the classical dynamics that naturally emerges from $P$-divisible
quantum dynamics. The classical dynamics will be obtained by restricting both the quantum dynamics and its tomographic reconstruction to a same (maximally) Abelian sub-algebra of $M_n(\mathbb{C})$ generated by any given choice of  $n$ orthogonal rank-$1$ projections $P_j\in M_n(\mathbb{C})$ such that
$P_jP_k=\delta_{jk}$ and $\sum_{j=1}^nP_j=\mathds{1}$.  In the following, for sake of simplicity, we shall identify the commutative sub-algebra $\mathcal{P}$ with the orthonormal, complete set $\{P_i\}_{i=1}^n$ that generates it and refer to the quantum dynamics restricted to it as its \textit{classical reduction}.

In order to extract a classical dynamics from a quantum one, one can follow two prescriptions;  one can either reduce the generator of the quantum evolution, or reduce the quantum dynamics itself. The latter reduction procedure is somewhat  dual to the so-called embeddability
of a given classical stochastic matrix into a Markovian quantum evolution, that may give rise to some quantum advantage~\cite{LostaglioKorzekwa2021quantum,shahbeigi2023quantum}.
The two different points of view yield different classical dynamics that are discussed in Section \ref{sec:Comparison}.
In particular, regarding the property of $P$-divisibility, the following question naturally emerges: \hfill\break
\textit{Given a $P$-divisible dynamics $\family{\Lambda_t}$, are their classical reductions $P$-divisible?}  \par
As we shall see, by reducing the generator, the answer to the previous question \color{black}
is somewhat trivial, in that quantum $P$-divisibility is equivalent to classical $P$-divisibility. Conversely,
by reducing the dynamical map one
 has the peculiar effect that a $P$-divisible quantum dynamics need not give rise to classically reduced divisible dynamics.

Such an issue is not only mathematically interesting; indeed, as we shall see, the loss of $P$-divisibility by the classical reduction~\eqref{classicaldynFROMGEN-t} can be interpreted from a physical perspective, in terms of backflow of information.

In the same Section~\ref{sec:Comparison}, after presenting the general framework of classical reductions, we will illustrate the main results of the work which will be proved in Sections~\ref{sec:classicalfromquantum}-\ref{sec:class}.

\section{Classical reductions and results}
\label{sec:Comparison}

As explained in the Introduction, 
upon choosing a maximally Abelian subalgebra $\mathcal{P}$ of $\Md{n}$ one may proceed along two different paths.
Namely, given the generator $\mathcal{L}_t$ of the quantum dynamics $\Lambda_t$, one can consider the  $n\times n$ matrix $L(t)$ with entries
\begin{equation}
		\label{classicaldynFROMGEN-t}
		L_{ij}(t)=\Tr(P_i\mathcal{L}_t[P_j])
	\end{equation}
and study the one-parameter family of $n\times n$ stochastic matrices $D(t)$ solving the classical master equation
\begin{equation}
\label{class-master}
\dot{D}(t)=L(t)\,D(t)\,.
\end{equation}
Another possibility is to focus upon the classical reduction of the generated quantum dynamics  $\Lambda_t$ itself; namely, to restrict to the stochastic matrix $T(t)$ with entries
\begin{equation}
		\label{classicaldyn-t}
		T_{ij}(t)=\Tr(P_i\Lambda_t[P_j])]\ .
	\end{equation}
The physical interpretation of $T(t)$ becomes clear by introducing the decohering  map
\begin{equation}
\label{decoheringmap1}
\mathbb{P}[\rho]=\sum_{i=1}^n P_i \rho P_i
\end{equation}
which diagonalizes every state $\rho$ along the orthonormal basis associated with the chosen complete set of orthonormal projections
$\mathcal{P}=\{P_i\}_{i=1}^n$.
Then, the map $\Lambda^{\mathbb{P}}_t:=\mathbb{P}\,\Lambda_t\,\mathbb{P}$ acts initially on diagonal matrices, turns them into non-diagonal ones at time $t$ and the developed coherences are then suppressed by the left-most $\mathbb{P}$.
Explicitly, for all states $\rho$, one gets:
\begin{equation}
\label{decoheringmap2}
\Lambda^{\mathbb{P}}_t[\rho]=\sum_{j,k=1}^n T_{jk}(t)\, \Tr(P_k\rho)\,P_j\ .
\end{equation}
Practically speaking, in this approach, at each instant of time, the quantum dynamics is preceded and followed by two projective measurements specified by
$\mathcal{P}$.

To compare the latter classical dynamics with the one obtained from~\eqref{classicaldynFROMGEN-t}, for sake of simplicity we restrict to the case when
the quantum dynamics $\Lambda_t$ is a semigroup with time-independent generator $\mathcal{L}$: $\Lambda_t=e^{t\mathcal{L}}$.
It thus follows that $L(t)=L$ in~\eqref{class-master} and $D(t)=e^{t\,L}$ so that:
\begin{eqnarray}
\label{decoheringmmap3}
\mathcal{L}^\mathbb{P}[\rho]&:=&\mathbb{P}\,\mathcal{L}\,\mathbb{P}[\rho]=\sum_{j,k=1}^n L_{jk}\, \Tr(P_k\,\rho)\,P_j\ ,\\
\label{decoheringmmap4}
e^{t\mathcal{L}^\mathbb{P}}\mathbb{P}[\rho]&=&\sum_{j,k=1}^n D_{jk}(t)\, \Tr(P_k\,\rho)\,P_j\ .
\end{eqnarray}
On the other hand, fix $t\geq 0$ and consider the large $N$ behaviour
\begin{eqnarray}
\label{decoheringmmap5}
\left(\mathbb{P}\,\Lambda_{t/N}\,\mathbb{P}\right)^N\simeq \left(\mathds{1}+\frac{t}{N}\,\mathcal{L}^\mathbb{P}\right)^N\mathbb{P}\simeq
e^{t\mathcal{L}^\mathbb{P}}\mathbb{P}\ .
\end{eqnarray}
Therefore, unlike when reducing the quantum evolution up to time $t$ by an initial and a final projective measurements, the classical reduction of the quantum generator physically amounts to letting the quantum dynamics to act  in between iterated projective measurements
separated by smaller and smaller time intervals.

\begin{remark}
\label{rem:semigroup}
Notice that
\begin{eqnarray*}
\left(\mathbb{P}\,\Lambda_\epsilon\,\mathbb{P}\right)^N[\rho]&=&\sum_{i,j_N,\ldots,j_1}T_{ij_N}(\epsilon)\cdots T_{j_2j_1}(\epsilon)\, \Tr(\rho P_{j_1})\, P_i\\
&=&\sum_{i,j}\left[\big(T(\epsilon)\big)^N\right]_{ij}\, \Tr(\rho P_j)\, P_i\ ,
\end{eqnarray*}
where $\epsilon=t/N$. On the other hand, in the semi-group case, from~\eqref{classicaldynFROMGEN-t} and~\eqref{classicaldyn-t}, with $\mathcal{L}_t=\mathcal{L}$ and $L(t)=L$,  it follows that $T(\epsilon)\simeq \mathds{1}+\epsilon\, L\simeq D(\epsilon)$; therefore,
\begin{equation*}
\lim_{N\to+\infty} T(t/N)^N=D(t)\neq T(t)\ .
\end{equation*}
The origin of the discrepancy is the fact that the family of matrices $T(t)$ is not a semigroup; indeed, if $T(t)$ is forced to be a semigroup, so that
$T(t+\epsilon)=T(\epsilon)T(t)$, then one has
$T(t)=D(t)$.
\end{remark}

\subsection{Results}
\label{sec:Results}

We will mainly focus not upon classical reductions of generators, rather upon those of the generated quantum dynamics and illustrate
a number of results in this context. They can be summarized as follows.

First results, some of them already present in the literature, regard the connections between $P$-divisibility and classical and quantum stochastic processes. They are fundamental tools for what follows and are reported in Section~\ref{sec:classicalfromquantum}.

Other results concern an interesting  physical scenario that involves  a $P$-divisible quantum dynamics whose classical reduction is not $P$-divisible.
Indeed, the quantum dynamics does not exhibit $BFI$, while the classical one does.
In Section~\ref{sec:BFI_interpretation}, by working in the so-called $BLP$ framework~\cite{BLP,ColloquiumBreuer},
we show that classical $BFI$ at a given instant of time  is due to information being stored in the coherences of the quantum states at previous times.
These coherences are built up by the quantum dynamics as it generically maps out of the chosen commutative algebra used for the reduction.

Further results regard unital qubit dynamics, for which $P$-divisibility of both the quantum dynamics and its classical reduction are to some extent under control. In Section \ref{sec:unitalqubit}, we show that loss of $P$-divisibility for the classical reduction 
is typical of unitary evolutions, but not of purely dissipative (namely, whose generator is such that $\mathcal{L}_t=\mathcal{D}_t$, see~\eqref{GKSL}) and self-dual qubit dynamics. Indeed, for them we show that  $P$-divisibility of the reduced classical stochastic processes is equivalent to that of the quantum map that originates it.

The last results concern a wider class of of orthogonally covariant qubit dynamics; in Section~\ref{sec:class}, using necessary and sufficient conditions for $P$ and $CP$-divisibility in terms of the generators, we provide  a fairly general construction of purely dissipative quantum dynamics whose classical reductions are not $P$-divisible.

\section{Classical reductions and $P$-divisibility}
\label{sec:classicalfromquantum}

In this Section, we discuss some fundamental preliminary results about classical reductions, first considering the case of time-independent positive maps and time independent-generators of positive semigroups before moving to the time-inhomogeneous case of $P$-divisible families.
\color{black}

\subsection{Classical Reduction of $PTP$ maps}

As much as for one parameter families of dynamical maps in~\eqref{classicaldyn-t}, given an orthonormal complete set of orthogonal rank-1 projections $\{P_i\}_{i=1}^n$ spanning a maximally Abelian algebra $\mathcal{P}$, any $PTP$ quantum map $\Phi$ on $\mathcal{S}(M_n(\mathbb{C}))$  defines a stochastic matrix through the entries
\begin{equation}\label{stochasticfromquantum}
	T_{ij}:=\Tr(P_i\Phi[P_j])\,,\qquad P_i P_j=\delta_{ij} P_i\,.
\end{equation}
Indeed, the positivity of $\Phi$ together with the completeness of the projections and trace-preservation yields $T_{ij}\geq 0$ and $\sum_{i=1}^nT_{ij}=1$. The converse, though, is not true: even if the matrices $T$ defined by $\Phi$ through~\eqref{stochasticfromquantum} are stochastic for
all maximally Abelian $\mathcal{P}$, $\Phi$ need not be positive.
The reason is that the positivity of $\Phi$ amounts to
\begin{equation}
	\label{positivity}
	\Tr(P\Phi[Q])\ge0
\end{equation}
for all one-dimensional, not necessarily orthogonal projections $P$ and $Q$, while \eqref{stochasticfromquantum} restricts to orthogonal ones.

\begin{example}
	\label{example:classstochdoesntimplypos}
	Let $\Phi$ be a Hermiticity and trace-preserving, unital qubit map on $\mathcal{S}(\Md{2})$, namely $\Phi[\sigma_0]=\sigma_0$. Here $\sigma_0=\mathds{1}$ denotes the identity $2\times 2$ matrix which, together with the other Pauli matrices, $\sigma_{1,2,3}$, after normalization, constitutes a Hilbert-Schmidt basis:
	$	\displaystyle{\rm Tr}\left(\frac{\sigma_\alpha}{\sqrt{2}}\,\frac{\sigma_\beta}{\sqrt{2}}\right)=\delta_{\alpha\beta}$.
	They	provide a matrix representation of $\Phi$ through the entries
	\begin{equation}
		\label{Paulirep}
		\Phi_{\alpha\beta}:=\frac{1}{2}\Tr\left({\sigma_\alpha}\,\Phi\left[{\sigma_\beta}\right]\right)\ , \quad
		\alpha=0,\dots,3\ .
	\end{equation}
	where  $\Phi_{00}=1$ and $\Phi_{0i}=0$ for $i=1,2,3$ follows from trace preservation. Also, from unitality it follows that $\Phi_{i0}=0$; therefore, one concentrates upon the real matrix $\widetilde{\Phi}$ with entries
	\begin{equation}
		\widetilde\Phi_{ij}:=\frac{1}{2}\Tr(\sigma_i\Phi[\sigma_j]), \quad i=1,2,3\,,
	\end{equation}
	so that	$\Phi[\sigma_i]=\sum_{j=1}^3 \widetilde{\Phi}_{ji}\, \sigma_j, $.
	Let $P_0$, $P_1=\mathds{1}-P_0$ arbitrary orthonormal projectors in $M_2(\mathbb{C})$ and
	define the $2\times 2$ matrix with entries $T_{ij}=\Tr(P_i\Phi[P_j])$. Since $\Phi$ is unital,
	\begin{equation}\label{tindepstochasticmatrix}
		T=\begin{pmatrix}
			T_{00} & 1-T_{00}\\
			1-T_{00} & T_{00}
		\end{pmatrix},
	\end{equation}
	is a (bi-)stochastic matrix iff $0\le T_{00}\le1$.  In the Bloch representation, $P_0$  is identified by means of a unit real vector $\boldsymbol{n}\in\mathbb{R}^3$, $P_{\boldsymbol{n}}=(\mathds{1}+\boldsymbol{n}\cdot\boldsymbol\sigma)/2$, where $\boldsymbol{\sigma}=(\sigma_1,\sigma_2,\sigma_3)$.
	Then, the bistochasticity conditions read
	\begin{align*}
		1\geq T_{00}=\Tr(P_0\Phi[P_0])=\frac{1}{2}+\frac{1}{2}\mel{\boldsymbol{n}}{\widetilde{\Phi}}{\boldsymbol{n}}\ge0,
	\end{align*}
	for all unit $\boldsymbol{n}\in \mathbb{R}^3$. Hence, $T$ is stochastic for every choice of orthogonal projectors $P_0$ and $\mathds{1}-P_0$ iff
	\begin{equation}
		\ell(\widetilde\Phi):=\sup_{\norm{\boldsymbol{n}}=1}|\mel{\boldsymbol{n}}{\widetilde{\Phi}}{\boldsymbol{n}}|\le1.
	\end{equation}
	Notice that, if $\widetilde\Phi$ is non-symmetric, $\widetilde{\Phi}\neq \widetilde{\Phi}^T$, where ${}^T$ denotes transposition,   $\ell(\widetilde\Phi)$ is in general smaller than the norm
	\begin{equation}
		\label{norm}
		\|\widetilde{\Phi}\|=\sup_{\substack{\boldsymbol{n},\boldsymbol{m}\\\norm{\boldsymbol{n}}=\norm{\boldsymbol{m}}=1}} |\mel{\boldsymbol{n}}{\widetilde{\Phi}}{\boldsymbol{m}}|\ .
	\end{equation}
	On the other hand, the map is positive iff
	\begin{equation}
		\label{positivity_rewritten}
		\Tr(P_{\pm\boldsymbol{n}}\Phi[P_{\vb{m}}])=\frac{1}{2}\pm\frac{1}{2}\mel{\boldsymbol{n}}{\widetilde{\Phi}}{\boldsymbol{m}}\ge0\,,
	\end{equation}
	for all unit vectors $\boldsymbol{n},\boldsymbol{m}\in\mathbb{R}^3$; namely, iff $\vert\mel{\boldsymbol{n}}{\widetilde{\Phi}}{\boldsymbol{m}}\vert\leq 1$. Then, $\|\widetilde{\Phi}\|\leq 1$, since $\wt{\Phi}$ is real.
	If $\widetilde{\Phi}=\widetilde{\Phi}^T$, then $\|{\widetilde{\Phi}}\|=\ell(\widetilde{\Phi})$ and $T$ is stochastic for all ${P}_0$ iff the map $\Phi$ is positive.
	
	Instead, let $\alpha\in (0,\frac{\pi}{2})$, $\lambda>\eta\ge\lambda \cos(\alpha)>0$; then,
	\begin{equation}
		\widetilde{\Phi}=\begin{pmatrix}
			\lambda \cos(\alpha) & -\lambda \sin(\alpha) & 0 \\
			\lambda \sin(\alpha) &	\lambda \cos(\alpha) & 0 \\
			0					 &  0					 & \eta \\
		\end{pmatrix}\neq \widetilde{\Phi}^T
	\end{equation}
	has singular values $\lambda$ and 	$\eta$. Therefore, $\|{\widetilde{\Phi}}\|=\lambda$, while $|\mel{\boldsymbol{n}}{\widetilde{\Phi}}{\boldsymbol{n}}|=\lambda \cos(\alpha)+n_3^2\,(\eta -\lambda\cos(\alpha))$.
	Then, choosing $\lambda=1+\varepsilon$ and $\eta=\lambda\cos(\alpha)$, one has
	\begin{align*}
		\ell(\widetilde{\Phi})=\sup_{\norm{\boldsymbol{n}}=1}|\mel{\boldsymbol{n}}{\widetilde{\Phi}}{\boldsymbol{n}}|= (1+\varepsilon)\cos(\alpha)\le1<\|\widetilde{\Phi}\|.
	\end{align*}
	for $\varepsilon$ sufficiently small. With such choice of parameters, the map ${\Phi}$ is not positive, while $T_{ij}=\Tr(P_i\Phi[P_j])$  are the entries of a well defined stochastic matrix, for all choices of orthonormal projectors $P_0$ and $P_1$.
\end{example}

\subsection{Classical from quantum semi-groups}
\label{qcsem:sec}

For semigroups of quantum maps as in~\eqref{quantumsemigroup}, trace preservation imposes  $\Tr(\mathcal{L}[\rho])=0$ for all $\rho\in\mathcal{S}(M_n(\mathbb{C}))$. Kossakowski \cite{Kossakowski} found necessary and sufficient conditions for $\mathcal{L}$ to be the generator of a positive trace-preserving ($PTP$) semigroup $\{\Lambda_t\}_{t\ge0}$.
\begin{theorem}
	\label{kossakowskiconditions}
	A linear map $\mathcal{L}$ on $\mathcal{S}(M_n(\mathbb{C}))$  generates a semigroup of PTP maps $\family{\Lambda_t}$ iff
	\begin{equation}
		\Tr(Q\mathcal{L}[P])\ge0\ ,\quad \Tr(\mathcal{L}[\mathds{\rho}])=0
	\end{equation}
	for all pairs of projectors $P\perp Q$ and all $\rho\in\mathcal{S}(M_n(\mathbb{C}))$.
\end{theorem}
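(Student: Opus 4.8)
The plan is to establish the two implications separately, the forward one being elementary and the converse requiring a ``subtangentiality'' reformulation. For necessity, suppose $\mathcal{L}$ generates a semigroup $\Lambda_t=e^{t\mathcal{L}}$ of $PTP$ maps. Differentiating $\Tr(\Lambda_t[\rho])=\Tr(\rho)$ at $t=0$ gives $\Tr(\mathcal{L}[\rho])=0$. For the orthogonality condition, fix projectors $P\perp Q$ and set $f(t):=\Tr(Q\,\Lambda_t[P])$; positivity of $\Lambda_t$ together with $P,Q\ge0$ gives $f(t)\ge0$ for all $t\ge0$, while $f(0)=\Tr(QP)=0$ since $PQ=QP=0$. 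Hence $f$ attains a minimum over $[0,\infty)$ at $t=0$, so $f'(0)=\Tr(Q\,\mathcal{L}[P])\ge0$. I would also dispose of trace preservation on the converse side right away: the maximally mixed state and the states $\mathds{1}/n+H$ with $H$ traceless Hermitian of small norm show that $\Tr(\mathcal{L}[\rho])=0$ on $\mathcal{S}(\Md{n})$ forces $\Tr\circ\mathcal{L}=0$ on all of $\Md{n}$; iterating, $\Tr(\mathcal{L}^k[X])=0$ for every $k\ge1$, whence $\Tr(e^{t\mathcal{L}}[X])=\Tr(X)$ and each $\Lambda_t$ is trace preserving.

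The core is positivity of $\Lambda_t$, which I would obtain in three steps. \textbf{Step 1 (reformulation).} The hypothesis implies that for every $A\ge0$ and every unit vector $\psi$ with $A\psi=0$ one has $\langle\psi|\mathcal{L}[A]|\psi\rangle\ge0$: writing the spectral decomposition $A=\sum_i a_i P_i$ with $a_i>0$ and mutually orthogonal projectors $P_i$, the condition $A\psi=0$ forces $\psi\perp\mathrm{ran}(P_i)$ for all $i$, so $Q:=|\psi\rangle\langle\psi|$ is a rank-one projector with $Q\perp P_i$, and $\langle\psi|\mathcal{L}[A]|\psi\rangle=\sum_i a_i\,\Tr(Q\,\mathcal{L}[P_i])\ge0$ by assumption. \textbf{Step 2 (resolvent positivity).} Fix any $\lambda>\|\mathcal{L}\|$, so that $\lambda\,\mathrm{id}-\mathcal{L}$ is invertible. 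Given $A\ge0$, let $B:=(\lambda\,\mathrm{id}-\mathcal{L})^{-1}[A]$, i.e. $\mathcal{L}[B]=\lambda B-A$. If $\mu:=\lambda_{\min}(B)<0$ with unit eigenvector $\psi$, then $B-\mu\mathds{1}\ge0$ is annihilated by $\psi$, and Step 1 yields $0\le\langle\psi|\mathcal{L}[B-\mu\mathds{1}]|\psi\rangle=\lambda\mu-\langle\psi|A|\psi\rangle-\mu\langle\psi|\mathcal{L}[\mathds{1}]|\psi\rangle$, i.e. $\mu\big(\lambda-\langle\psi|\mathcal{L}[\mathds{1}]|\psi\rangle\big)\ge\langle\psi|A|\psi\rangle\ge0$; since $\lambda-\langle\psi|\mathcal{L}[\mathds{1}]|\psi\rangle\ge\lambda-\|\mathcal{L}\|>0$, this contradicts $\mu<0$, so $B\ge0$ and $(\lambda\,\mathrm{id}-\mathcal{L})^{-1}$ is a positive map. \textbf{Step 3 (exponential formula).} Conclude via $e^{t\mathcal{L}}=\lim_{N\to\infty}\big(\mathrm{id}-\tfrac{t}{N}\mathcal{L}\big)^{-N}=\lim_{N\to\infty}\big(\tfrac{N}{t}\big)^{N}\big(\tfrac{N}{t}\,\mathrm{id}-\mathcal{L}\big)^{-N}$: for $N$ large, $N/t>\|\mathcal{L}\|$, so each factor is a positive map by Step 2 and hence so is $\Lambda_t=e^{t\mathcal{L}}$. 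Combined with the trace preservation above, $\{\Lambda_t\}_{t\ge0}$ is a $PTP$ semigroup.

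The delicate point is Step 2: the naive resolvent estimate is spoiled by the undetermined term $\langle\psi|\mathcal{L}[\mathds{1}]|\psi\rangle$, and the resolution is to absorb it by taking $\lambda>\|\mathcal{L}\|$ (equivalently, rescaling $\mathcal{L}\mapsto\mathcal{L}-c\,\mathrm{id}$, which only multiplies $e^{t\mathcal{L}}$ by the positive scalar $e^{-ct}$ and so does not affect positivity). Everything else — the spectral reduction of Step 1 and the finite-dimensional exponential formula of Step 3 — is routine. Alternatively, Steps 2–3 can be replaced by a direct Nagumo--Brezis type invariant-cone argument for the positive semidefinite cone, whose self-dual structure has the rank-one projectors as extreme rays of the dual cone; this is essentially the route of the original reference~\cite{Kossakowski}.
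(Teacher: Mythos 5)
Your proof is correct. Note first that the paper itself does not prove Theorem~\ref{kossakowskiconditions}: it is quoted as a known result and attributed to Kossakowski's 1972 paper, so there is no in-text argument to compare against. What you supply is a complete, self-contained proof along the standard invariant-cone route: the necessity direction (differentiating $\Tr(Q\Lambda_t[P])\ge 0$ at its minimum $t=0$) is exactly the elementary argument one expects, and the sufficiency direction via Step~1 (reduction to ``$A\ge0$, $A\psi=0$ implies $\langle\psi|\mathcal{L}[A]|\psi\rangle\ge0$''), Step~2 (positivity of the resolvent $(\lambda\,\mathrm{id}-\mathcal{L})^{-1}$ for $\lambda>\|\mathcal{L}\|$), and Step~3 (the backward-Euler formula $e^{t\mathcal{L}}=\lim_N(\mathrm{id}-\tfrac{t}{N}\mathcal{L})^{-N}$, plus closedness of the cone of positive maps) is sound; the way you neutralize the $\langle\psi|\mathcal{L}[\mathds{1}]|\psi\rangle$ term by taking $\lambda$ large is precisely the right fix, and it is the point where a naive subtangentiality argument would stall. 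Two small things you leave implicit and could state: (i) $\mathcal{L}$ must be Hermiticity-preserving for $B=(\lambda\,\mathrm{id}-\mathcal{L})^{-1}[A]$ to have a real spectrum and a $\lambda_{\min}$ — this follows from $\mathcal{L}$ being defined (real-linearly) on the state space and extended canonically, but it deserves a sentence; (ii) the norm $\|\mathcal{L}\|$ in Step~2 should be the one induced by the operator norm on $M_n(\mathbb{C})$, so that $\langle\psi|\mathcal{L}[\mathds{1}]|\psi\rangle\le\|\mathcal{L}[\mathds{1}]\|_\infty\le\|\mathcal{L}\|$. Neither affects the validity of the argument.
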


A semigroup of $n\times n$ stochastic matrices $T(t)=e^{t\,  L}$ with generating matrix $L=[L_{ij}]$ naturally gives rise to a semigroup of operators on every commutative
algebra $\mathcal{P}$ spanned by $\{P_i\}_{i=1}^n$ orthogonal projectors such that $\sum_{i=1}^nP_i=\mathds{1}$. Indeed, using the decohering map introduced in~\eqref{decoheringmap1}, one sets
	\begin{align*}
		\mathcal{L}^{\mathbb{P}}[\rho]&=\sum_{ij} L_{ij} \Tr(\rho P_j) \, P_i\,,
	\end{align*}
and exponentiates obtaining a semigroup of maps $\Lambda^{\mathbb{P}}_t={\rm e}^{t\mathcal{L}^{\mathbb{P}}}$ on the algebra of $n\times n$ matrices.
It turns out that 	
$$
\Lambda_t^{\mathbb{P}}[\rho]=\sum_{ij} T_{ij}(t) \Tr(\rho P_j) P_i\ .
$$
Then, reducing $\mathcal{L}^{\mathbb{P}}$ or $\Lambda_t^{\mathbb{P}}$ to the algebra $\mathcal{P}$, one ends up with the same classical  process
$T(t)=e^{t L}$.
Thus, from Theorem~\ref{kossakowskiconditions}
	\begin{equation}
	\begin{aligned}
			L_{ij}=\Tr(P_i\mathcal{L}^{\mathbb{P}}[P_j])\ge0 \quad i\ne j\, \\ \qquad \sum_i\Tr(P_i\mathcal{L}^{\mathbb{P}}[P_j])=\sum_{i=1}^n L_{ij}=0\ . \label{kolmorovconditions}
	\end{aligned}
	\end{equation}
	which are the so-called Kolmogorov conditions~\cite{ChrusReview22,VanKampen}, that are necessary and sufficient for an $n\times n$ matrix $L$ to generate a semigroup of stochastic matrices ${T(t)={e}^{tL}}$. This can be seen by expanding the semigroup $T(\epsilon)$ for $\epsilon\ll1$,
	yielding
	\begin{equation}\label{Kolmogorv_expansion}
	T_{ij}(\epsilon)=\delta_{ij}+ \epsilon\, L_{ij}  + O(\epsilon^2)\,,
	\end{equation}
	Necessity of \eqref{kolmorovconditions} is straightforwardly deduced from~\eqref{Kolmogorv_expansion}, while sufficiency follows by invoking the semigroup property.
	
 In conjunction with the previous observation, one derives the following natural consequence from Theorem \ref{kossakowskiconditions}.
\begin{corollary}\label{corollary1}
	A linear map $\mathcal{L}$ on $\mathcal{S}(M_n(\mathbb{C}))$  such that  $\Tr(\mathcal{L}[\mathds{\rho}])=0$, generates a semigroup of $PTP$
	maps $\family{\Lambda_t}$ iff the $n\times n$ matrix $L$ with entries
	\begin{equation}
		\label{classicaldynFROMGEN}
		L_{ij}=\Tr(P_i\mathcal{L}[P_j])
	\end{equation}
	generates a semigroup of stochastic $n\times n$ matrices, for any orthonormal, complete set $\{P_i\}_{i=1}^n$ of rank-1 projectors.
\end{corollary}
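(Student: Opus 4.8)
The plan is to deduce the corollary directly from Theorem~\ref{kossakowskiconditions} together with the Kolmogorov conditions~\eqref{kolmorovconditions}, by matching, basis by basis, the Kossakowski positivity requirement $\Tr(Q\mathcal L[P])\ge0$ against the off-diagonal sign conditions $L_{ij}\ge0$ for $i\ne j$. Throughout, the trace condition $\Tr(\mathcal L[\rho])=0$ is a standing hypothesis, so on both sides of the equivalence it is only the positivity content that is at stake.

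First I would dispose of the ``only if'' direction. Assume $\mathcal L$ generates a semigroup of $PTP$ maps and fix an arbitrary orthonormal, complete set $\{P_i\}_{i=1}^n$ of rank-$1$ projectors. For $i\ne j$ the projectors $P_i$ and $P_j$ are orthogonal, hence Theorem~\ref{kossakowskiconditions} gives $L_{ij}=\Tr(P_i\mathcal L[P_j])\ge0$. Moreover, since $\sum_i P_i=\mathds1$ and $\Tr(P_j)=1$, one has $\sum_i L_{ij}=\Tr(\mathcal L[P_j])=0$ by the assumed trace condition. Thus $L$ satisfies~\eqref{kolmorovconditions} and therefore generates a semigroup of stochastic matrices $e^{tL}$, for every choice of $\mathcal{P}$.

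For the ``if'' direction I would verify the hypotheses of Theorem~\ref{kossakowskiconditions}; only the positivity $\Tr(Q\mathcal L[P])\ge0$ needs an argument. Let $P\perp Q$ be orthogonal projectors of arbitrary rank $r$ and $s$. Decompose $P=\sum_{a=1}^{r}p_a$ and $Q=\sum_{b=1}^{s}q_b$ into mutually orthogonal rank-$1$ projectors; since $PQ=0$ the whole family $\{p_a\}\cup\{q_b\}$ is mutually orthogonal, and because $r+s\le n$ it extends to an orthonormal, complete set $\{P_i\}_{i=1}^n$ of rank-$1$ projectors. By hypothesis the matrix $L$ associated with this basis obeys~\eqref{kolmorovconditions}; in particular each $\Tr(q_b\mathcal L[p_a])$ is an off-diagonal entry of $L$ (as $q_b\ne p_a$ among the chosen projectors), hence nonnegative. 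Summing, $\Tr(Q\mathcal L[P])=\sum_{a,b}\Tr(q_b\mathcal L[p_a])\ge0$, and Theorem~\ref{kossakowskiconditions} then gives that $\mathcal L$ generates a semigroup of $PTP$ maps.

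As for the difficulty: there is no deep obstacle, the corollary being essentially a repackaging of Theorem~\ref{kossakowskiconditions} and~\eqref{kolmorovconditions}. The single point that needs care is the ``if'' direction, where the Kossakowski condition quantifies over \emph{all} pairs of orthogonal projectors while the Kolmogorov conditions constrain only rank-$1$ ones; this gap is bridged by the elementary refinement-and-completion argument above, so that $\Tr(Q\mathcal L[P])$ becomes a sum of off-diagonal generator entries in a suitable basis. One should also observe that the row-sum condition $\sum_i L_{ij}=0$ adds nothing beyond the assumed $\Tr(\mathcal L[\rho])=0$, so the genuine information exchanged between the quantum and classical formulations is precisely the sign of the off-diagonal elements.
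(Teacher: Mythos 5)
Your proof is correct and follows essentially the same route as the paper: the ``only if'' direction is a direct application of Theorem~\ref{kossakowskiconditions} plus the trace condition to recover the Kolmogorov conditions~\eqref{kolmorovconditions}, and the ``if'' direction bridges the rank gap by decomposing arbitrary orthogonal projectors $P\perp Q$ into rank-$1$ pieces completed to an orthonormal basis, so that $\Tr(Q\mathcal L[P])$ becomes a sum of nonnegative off-diagonal entries of the corresponding $L$. Your explicit remark that the family $\{p_a\}\cup\{q_b\}$ must be extended to a complete set is a point the paper leaves implicit, but the argument is the same.
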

\begin{proof}
	The ``only if'' part is trivial. For the ``if'' part, suppose that $L$ in~\eqref{classicaldynFROMGEN} generates a positive stochastic semigroup for all choices of $\{P_i\}_i$. This is equivalent to ask that conditions~\eqref{kolmorovconditions} are satisfied. Trace preservation readily follows since for any $\rho=\sum_i r_i R_i$, $R_i$ being its eigenprojectors, $\Tr(\mathcal{L}[\rho])=\sum_{j} r_j \sum_i \Tr(R_i\mathcal{L}[R_j])=0$.  On the other hand, given any pair of orthogonal projectors $P$ and $Q$, $P\,Q=0$, let their spectral representation be $Q=\sum_{i\in I} q_i P_i, P=\sum_{i\in J} p_j P_i$, with $I \cap J=0$, so that
	$$
	\Tr(P\mathcal{L}[Q])=\sum_{i\in I, j\in J}  p_j\, q_i \Tr({P}_j\mathcal{L}[P_i])\ge0\,,	$$
	which completes the proof.
\end{proof}

\subsection{Classical from quantum P-divisible families}
\label{timedep:sec}

We now consider the case of time-dependent generators as in~\eqref{eq0} and positive intertwiners as in~\eqref{intertwiners}, namely the case of $P$-divisible families of $CPTP$ maps.
Theorem \ref{kossakowskiconditions} generalizes as follows~\cite{ColloquiumBreuer,ChrusReview22}.

\begin{theorem}\label{timekossakowskiconditions}
Let $\mathcal{L}_t$ be the generator of a $CPTP$ dynamical map $\{\Lambda_t\}_{t\geq 0}$. 
Then $\{\Lambda_t\}_{t\ge0}$ is $P$-divisible iff $\Tr(Q\mathcal{L}_t[P])\ge0$, 	for all $t\ge0$, and for any pair of mutually orthogonal projectors $P\perp Q$.
\end{theorem}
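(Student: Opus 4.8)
The plan is to prove the two implications separately, in the spirit of the proof of the semigroup version, Theorem~\ref{kossakowskiconditions}. \emph{Necessity.} Suppose $\{\Lambda_t\}_{t\ge0}$ is $P$-divisible and fix $t\ge0$ together with a pair of mutually orthogonal projectors $P\perp Q$ (of arbitrary rank). Differentiating the intertwiners~\eqref{intertwiners} gives $\partial_\epsilon\Lambda_{t+\epsilon,t}=\mathcal{L}_{t+\epsilon}\Lambda_{t+\epsilon,t}$, whence $\partial_\epsilon\Lambda_{t+\epsilon,t}|_{\epsilon=0}=\mathcal{L}_t$ and $\Lambda_{t+\epsilon,t}=\mathrm{id}+\epsilon\,\mathcal{L}_t+o(\epsilon)$. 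Since $\Lambda_{t+\epsilon,t}$ is positive for all $\epsilon>0$ and $\Tr(QP)=0$, one has $0\le\Tr(Q\,\Lambda_{t+\epsilon,t}[P])=\epsilon\,\Tr(Q\,\mathcal{L}_t[P])+o(\epsilon)$, and dividing by $\epsilon$ and letting $\epsilon\to0^+$ gives $\Tr(Q\,\mathcal{L}_t[P])\ge0$.

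\emph{Sufficiency.} Now assume $\Tr(Q\,\mathcal{L}_t[P])\ge0$ for all $t$ and all $P\perp Q$. We must show that every intertwiner $\Lambda_{t,s}$, $t\ge s\ge0$, is positive; since a linear map $\Phi$ is positive iff $\Tr(P\,\Phi[\Sigma])\ge0$ for all rank-$1$ projectors $P,\Sigma$, and since both sides of the hypothesis are additive under the spectral decompositions of $P$ and $Q$, it suffices to use and to verify the condition for rank-$1$ projectors. The natural strategy is a first-exit argument: fix $s$, suppose $\Lambda_{t,s}$ is not positive for some $t>s$, and let $t^*$ be the infimum of such times; by continuity of $t\mapsto\Lambda_{t,s}$ and closedness of the positive cone, $\Lambda_{t^*,s}$ is positive, while there are $t_n\downarrow t^*$, states $\rho_n$ and rank-$1$ projectors $P_n$ with $\Tr(P_n\,\Lambda_{t_n,s}[\rho_n])<0$. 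Differentiating $f_n(t):=\Tr(P_n\,\Lambda_{t,s}[\rho_n])$ and passing to the limit, however, only yields the non-strict inequality $\Tr(P\,\mathcal{L}_{t^*}[\sigma])\ge0$, with $\sigma$ a state having $P$ in its kernel; this is compatible with a second-order dip of $f_n$ below zero and hence does \emph{not} close the argument. Removing this non-strictness is the main obstacle.

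The remedy is to perturb the generator: set $\mathcal{L}^{(\delta)}_t:=\mathcal{L}_t+\delta\big(\tfrac{\mathds{1}}{n}\Tr(\cdot)-\mathrm{id}\big)$, $\delta>0$, which is still trace-preserving and, for rank-$1$ orthogonal projectors, satisfies the \emph{strict} bound $\Tr(P\,\mathcal{L}^{(\delta)}_t[\Sigma])=\Tr(P\,\mathcal{L}_t[\Sigma])+\delta/n\ge\delta/n>0$. Re-running the first-exit set-up for the associated family $\{\Lambda^{(\delta)}_{t,s}\}$ (same notation $t^*,t_n,\rho_n,P_n$, so that $f_n(t^*)\ge0>f_n(t_n)$), the mean value theorem produces $\tau_n\in(t^*,t_n)$ with $f_n'(\tau_n)=\Tr(P_n\,\mathcal{L}^{(\delta)}_{\tau_n}[\Lambda^{(\delta)}_{\tau_n,s}[\rho_n]])<0$; extracting convergent subsequences $P_n\to P$, $\rho_n\to\rho$, $\tau_n\to t^*$ and using continuity of $t\mapsto\mathcal{L}_t$ and joint continuity of $(t,\rho)\mapsto\Lambda^{(\delta)}_{t,s}[\rho]$, one gets $\Tr(P\,\mathcal{L}^{(\delta)}_{t^*}[\sigma])\le0$, where $\sigma:=\Lambda^{(\delta)}_{t^*,s}[\rho]\ge0$ obeys $\Tr\sigma=1$ and $\Tr(P\sigma)=0$, i.e.\ $P$ is orthogonal to the support of $\sigma$. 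Writing $\sigma=\sum_i\mu_i\Sigma_i$ with rank-$1$ eigenprojectors $\Sigma_i\perp P$ and $\mu_i>0$, the strict bound gives $\Tr(P\,\mathcal{L}^{(\delta)}_{t^*}[\sigma])=\sum_i\mu_i\,\Tr(P\,\mathcal{L}^{(\delta)}_{t^*}[\Sigma_i])\ge\delta/n>0$, a contradiction. Hence $\Lambda^{(\delta)}_{t,s}$ is positive for all $\delta>0$ and $t\ge s$; by continuous dependence of the solution of~\eqref{eq0} on its generator, $\Lambda_{t,s}=\lim_{\delta\to0^+}\Lambda^{(\delta)}_{t,s}$ is a limit of positive maps, hence positive, so $\{\Lambda_t\}_{t\ge0}$ is $P$-divisible. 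The only input beyond the paper's standing assumptions is continuity of $t\mapsto\mathcal{L}_t$, which makes $f_n$ of class $C^1$ and legitimizes the limits.
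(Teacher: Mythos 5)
Your necessity argument coincides with the paper's: both expand the propagator $\Lambda_{t+\epsilon,t}=\mathrm{id}+\epsilon\,\mathcal{L}_t+o(\epsilon)$ and use $\Tr(QP)=0$. For sufficiency, however, you take a genuinely different route. The paper invokes the time-splitting formula~\eqref{timesplitting}, writes $\Lambda_{t,s}$ as a limit of products of frozen-generator semigroups $e^{\delta t_k\,\mathcal{L}_{t_k}}$, applies the time-independent Kossakowski criterion (Theorem~\ref{kossakowskiconditions}) to each factor, and concludes by closedness of the set of positive maps under composition and norm limits. You instead give a direct first-exit argument at the level of the ODE, regularized by the depolarizing perturbation $\mathcal{L}^{(\delta)}_t=\mathcal{L}_t+\delta\big(\tfrac{\mathds{1}}{n}\Tr(\cdot)-\mathrm{id}\big)$ so that the boundary inequality becomes strict, and then remove the perturbation by continuous dependence of the propagator on the generator. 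Your argument is correct: the key steps --- that $\Tr(P\sigma)=0$ at the exit time (obtained by squeezing $f_n(t^*)\ge0$ against $f_n(t_n)<0$ as $t_n\downarrow t^*$), the spectral decomposition of $\sigma$ into rank-one projectors orthogonal to $P$, and the resulting strict lower bound $\delta/n$ contradicting $\Tr(P\,\mathcal{L}^{(\delta)}_{t^*}[\sigma])\le0$ --- all go through. In effect you re-prove the Kossakowski mechanism from scratch in the time-inhomogeneous setting rather than reducing to the autonomous case by Trotterization; what this buys is independence from the time-splitting formula, at the price of assuming continuity of $t\mapsto\mathcal{L}_t$ (which you correctly flag, and which the paper's product-formula route also implicitly requires) and a longer compactness-plus-contradiction argument.
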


\begin{proof}
	The ``only if'' part follows straightforwardly from the positivity of $\Lambda_{t,s}$, since for all $s\ge0$
	\begin{equation}
		\Tr(Q\mathcal{L}_s[P])=\lim_{\epsilon\to 0}\frac{1}{\epsilon}{\Tr(Q\Lambda_{s+\epsilon,s}[P])}\ge0 .
	\end{equation}
	For the ``if'' part, rewrite $\mathcal{T}_{\leftarrow}e^{\int_{s}^t \dd{s}\mathcal{L}_s }$ by means of the time-splitting formula \cite{RivasHuelga}
	\begin{align}\label{timesplitting}
		\Lambda_{t,s} =\lim_{\max\limits_{k}
			 \delta t_k\to 0} \prod_{k=n-1}^0 e^{\delta t_k	\,\mathcal{L}_{t_k}}\,,
	\end{align}
	 with $t\equiv t_n \ge t_{n-1} \ge \dots\ge t_0\equiv s$ and $\delta t_k\equiv t_{k+1}-t_k$.
	 Fix $k\in\mathbb{N}$, $0\le k\le n-1$. Since $\Tr(Q\mathcal{L}_{t_k}[P])\ge0$ for all $P\perp Q$,   $t\mapsto e^{({t-t_k})\mathcal{L}_{t_k}}$ is a positive map for all $t\ge t_k$, by Theorem~\ref{kossakowskiconditions}. 
	 Positivity of $\Lambda_{t,s}$ follows from noting that the composition of positive maps is positive and that the set of positive maps is closed (for finite level systems, the closure is with respect to the norm topology on the space of linear maps of $\Md{d}$ onto itself) ~\cite{PaulsenPositive}.
\end{proof}
The classical analogue of the family of $P$-divisible $CPTP$ maps is a one-parameter family of stochastic matrices such that
\begin{equation}
	T(t)=T(t,s)T(s),
\end{equation}
for all $t\ge s\ge0$, where $T(t,t)=\mathds{1}$ and $T(t,s)$ is a stochastic matrix, $T_{ij}(t,s)\ge0$, $\sum_i T_{ij}(t,s)=1$. The classical time-local master equation has the form
\begin{equation}
	\dot T(t)=L(t) T(t).
\end{equation}

The quantum rendering of classical processes within a quantum setting outlined in Section~\ref{qcsem:sec} leads to the following generalizations of the Kolmogorov
conditions~\eqref{kolmorovconditions} and of Corollary~\ref{corollary1}.

\begin{proposition}
	\label{classicaldivis}
	A matrix $L(t)$ generates a classical $P$-divisible dynamics $T(t)$ iff
	\begin{equation}
		L_{ij}(t)\ge0 \quad \forall\,i \ne j, \qquad \sum_i L_{ij}(t)=0.
	\end{equation}
\end{proposition}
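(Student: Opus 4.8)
The plan is to prove the two implications separately, each being the classical transcription of an argument already carried out on the quantum side. For \emph{necessity}, I would pass to the intertwiners $T(t,s):=T(t)\,T(s)^{-1}$, which satisfy $T(t,t)=\mathds{1}$ and, by~\eqref{class-master}, $\partial_t T(t,s)\big|_{t=s}=\dot T(s)\,T(s)^{-1}=L(s)$, so that
\begin{equation*}
T(s+\epsilon,s)=\mathds{1}+\epsilon\,L(s)+O(\epsilon^2)\,,
\end{equation*}
exactly as in~\eqref{Kolmogorv_expansion}. Since $P$-divisibility forces $T(s+\epsilon,s)$ to be a stochastic matrix for every $\epsilon>0$, its off-diagonal entries are non-negative, whence $L_{ij}(s)=\lim_{\epsilon\to0^+}\epsilon^{-1}\,T_{ij}(s+\epsilon,s)\ge0$ for $i\ne j$; and its columns sum to one for all $\epsilon$, whence $\sum_i L_{ij}(s)=\partial_\epsilon\big(\sum_i T_{ij}(s+\epsilon,s)\big)\big|_{\epsilon=0}=0$.

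For \emph{sufficiency}, I would use the classical time-splitting formula, the analogue of~\eqref{timesplitting},
\begin{equation*}
T(t,s)=\lim_{\max_k\delta t_k\to0}\ \prod_{k=n-1}^0 e^{\delta t_k\,L(t_k)}\,,\qquad t\equiv t_n\ge\dots\ge t_0\equiv s,\ \ \delta t_k=t_{k+1}-t_k\,.
\end{equation*}
For each fixed $k$, $L(t_k)$ is a constant matrix obeying the Kolmogorov conditions~\eqref{kolmorovconditions}, so that $\tau\mapsto e^{\tau L(t_k)}$ is a semigroup of stochastic matrices for all $\tau\ge0$ — the time-independent statement recorded just below~\eqref{kolmorovconditions}, which is the classical shadow of Theorem~\ref{kossakowskiconditions}. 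Since a finite product of stochastic matrices is stochastic and the set of $n\times n$ stochastic matrices is closed (indeed compact), the limit $T(t,s)$ is stochastic as well; together with $T(t)=T(t,s)T(s)$ this is precisely $P$-divisibility.

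I expect the only genuinely delicate point to be in the sufficiency direction, namely justifying the classical time-splitting representation and exchanging the limit with the closed stochasticity constraint; both are verbatim transcriptions of steps already used in the proof of Theorem~\ref{timekossakowskiconditions}, so no new idea is needed. An equivalent alternative would be to embed $L(t)$ into the quantum generator $\mathcal{L}^{\mathbb{P}}_t[\rho]=\sum_{ij}L_{ij}(t)\,\Tr(\rho P_j)\,P_i$, observe as in the proof of Corollary~\ref{corollary1} that the Kolmogorov conditions on $L(t)$ are equivalent to $\Tr(Q\,\mathcal{L}^{\mathbb{P}}_t[P])\ge0$ for all $P\perp Q$, and then invoke Theorem~\ref{timekossakowskiconditions}; I would nonetheless prefer the direct argument, since it avoids having to check that the classical reduction of this particular $\mathcal{P}$-preserving quantum dynamics reproduces $T(t)$ at the level of intertwiners.
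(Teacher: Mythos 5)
Your proof is correct and follows exactly the route the paper intends: the paper states this proposition without an explicit proof, presenting it as the direct generalization of the Kolmogorov conditions~\eqref{kolmorovconditions} obtained by transcribing the time-splitting argument of Theorem~\ref{timekossakowskiconditions} to the classical setting, which is precisely your necessity-via-expansion and sufficiency-via-time-splitting argument. The only point worth noting is that the invertibility of $T(s)$ you implicitly use in the necessity direction is automatic for solutions of $\dot T(t)=L(t)T(t)$, so the intertwiner $T(t,s)=T(t)T(s)^{-1}$ is well defined and your expansion is justified.
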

If $T(t)$ is an invertible matrix, writing $L(t)=\dot T(t) T(t)^{-1}$ and $T(t,s)=T(t)T(s)^{-1}$ provides a time dependent version of Corollary \ref{corollary1}.
\begin{corollary}
	Let $\mathcal{L}_t$ be the generator of a $CPTP$ dynamical map $\{\Lambda_t\}_{t\geq 0}$. 
Then $\{\Lambda_t\}_{t\ge0}$ is $P$-divisible iff the $n\times n$ matrix defined by $L_{ij}(t)=\Tr(P_i\mathcal{L}_t[P_j])$ generates a classical $P$-divisible dynamics for any orthonormal set of rank-1 projectors $\{P_i\}_i$.
\end{corollary}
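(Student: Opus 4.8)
The plan is to reproduce, in the time-inhomogeneous setting, the argument that establishes Corollary~\ref{corollary1}, with Theorem~\ref{timekossakowskiconditions} playing the role of Theorem~\ref{kossakowskiconditions} and Proposition~\ref{classicaldivis} playing the role of the Kolmogorov conditions~\eqref{kolmorovconditions}. Throughout, ``orthonormal set of rank-1 projectors $\{P_i\}_i$'' is read as a \emph{complete} one, $\sum_{i=1}^n P_i=\mathds{1}$, which is what makes $L(t)$ an $n\times n$ matrix and what allows completing partial families below.

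\textbf{The ``only if'' direction.} Assume $\{\Lambda_t\}_{t\ge0}$ is $P$-divisible. By Theorem~\ref{timekossakowskiconditions}, $\Tr(Q\mathcal{L}_t[P])\ge0$ for every $t\ge0$ and every pair of mutually orthogonal projectors. Specializing to the rank-1 projectors of a fixed complete set $\{P_i\}_i$ gives $L_{ij}(t)=\Tr(P_i\mathcal{L}_t[P_j])\ge0$ whenever $i\ne j$, since then $P_i\perp P_j$. Moreover, trace preservation of $\Lambda_t$ forces $\Tr(\mathcal{L}_t[\rho])=0$ on all states, hence $\sum_{i=1}^n L_{ij}(t)=\Tr(\mathcal{L}_t[P_j])=0$. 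These are exactly the hypotheses of Proposition~\ref{classicaldivis}, so $L(t)$ generates a classical $P$-divisible dynamics, and the argument is valid for every choice of $\{P_i\}_i$.

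\textbf{The ``if'' direction.} Suppose that for every complete orthonormal set of rank-1 projectors $\{P_i\}_i$ the associated matrix $L(t)$, $L_{ij}(t)=\Tr(P_i\mathcal{L}_t[P_j])$, generates a classical $P$-divisible dynamics; by Proposition~\ref{classicaldivis} this says $L_{ij}(t)\ge0$ for $i\ne j$ and $\sum_i L_{ij}(t)=0$, for all $t$ and all such sets. We must verify the hypothesis of Theorem~\ref{timekossakowskiconditions}, i.e. $\Tr(Q\mathcal{L}_t[P])\ge0$ for arbitrary, not necessarily rank-1, mutually orthogonal projectors $P\perp Q$. Given such a pair, note $[P,Q]=0$, diagonalize them simultaneously, and write $P=\sum_{j\in J}P_j$, $Q=\sum_{i\in I}P_i$ with $I\cap J=\emptyset$, completing $\{P_k\}$ to a full orthonormal rank-1 set. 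Then
\begin{equation*}
\Tr(Q\,\mathcal{L}_t[P])=\sum_{i\in I}\sum_{j\in J}\Tr(P_i\,\mathcal{L}_t[P_j])=\sum_{i\in I}\sum_{j\in J}L_{ij}(t)\ge0,
\end{equation*}
each term being nonnegative because $i\ne j$ whenever $i\in I,\ j\in J$. Since $\mathcal{L}_t$ is, by assumption, the generator of a $CPTP$ map, Theorem~\ref{timekossakowskiconditions} now yields $P$-divisibility of $\{\Lambda_t\}_{t\ge0}$.

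\textbf{Where the work sits.} There is no analytic difficulty once Theorem~\ref{timekossakowskiconditions} and Proposition~\ref{classicaldivis} are available: the statement is essentially bookkeeping. The only point requiring care is the ``if'' direction, where the classical conditions are a priori known only for \emph{rank-1} complete sets while Theorem~\ref{timekossakowskiconditions} demands positivity of $\Tr(Q\mathcal{L}_t[P])$ for projectors of arbitrary rank; the resolution, just as in Corollary~\ref{corollary1}, is simultaneous diagonalization of $P$ and $Q$ together with the disjointness of their supports, so that only off-diagonal entries $L_{ij}(t)$, $i\ne j$, ever appear. It is also worth observing that, although the sentence preceding the statement phrases things via $L(t)=\dot T(t)T(t)^{-1}$ and $T(t,s)=T(t)T(s)^{-1}$, no invertibility of $T(t)$ is actually needed here, because Proposition~\ref{classicaldivis} characterizes classical $P$-divisibility directly at the level of the generator $L(t)$.
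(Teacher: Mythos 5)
Your proof is correct and follows the same route the paper intends: combine Theorem~\ref{timekossakowskiconditions} with Proposition~\ref{classicaldivis}, exactly mirroring the proof of Corollary~\ref{corollary1} (spectral decomposition of arbitrary orthogonal projectors into a completed rank-1 family for the ``if'' direction, trace preservation for the column sums). The paper itself only sketches this corollary as ``the time-dependent version of Corollary~\ref{corollary1},'' so your write-up, including the observation that invertibility of $T(t)$ is not actually needed once one works at the level of the generator, is a faithful and complete rendering of the intended argument.
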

Therefore, any quantum dynamics $\family{\Lambda_t}$ is $P$-divisible iff its generator defines classically $P$-divisible processes for any choice of orthonormal basis. \par

However, as discussed in Section~\ref{sec:Comparison}, one may derive a classical continuous stochastic process not from the quantum generator, but directly from the generated quantum maps, as in~\eqref{classicaldyn-t}.
When this reduction procedure is adopted, the relation between $P$-divisibility of $\Lambda_t$ and that of its classical reduction $T(t)$ is non-trivial and, in particular, classical $P$-divisibility may be lost. The following Section~\ref{sec:BFI_interpretation} is devoted to a general interpretation of such effect in terms of $BFI$. In the remainder of the paper, we will focus exclusively on classical processes of the type~\eqref{classicaldyn-t}, analysing in detail instances when $P$-divisibility is either preserved or lost through the reduction procedure.

\begin{remark}
	Notice that, from~\eqref{classicaldyn-t}, when the matrices $T(t)$ are invertible, the classical intertwiners are given by $T(t,s)=T(t)T(s)^{-1}$. In general, they have no simple connection to the quantum intertwiners $\Lambda_{t,s}=\Lambda_t\,\Lambda_s^{-1}$. A particular case is when the dynamics $\Lambda_t$ is of the kind
	$	\Lambda_{t,s} \,\mathbb{P}=\mathbb{P}\,\Lambda_{t,s}\, \mathbb{P},$ with $\mathbb{P}$ as in~\eqref{decoheringmap1},
so that
	one can identify a stochastic propagator as $T_{ik}(t,s)=\Tr(P_i\Lambda_{t,s}[P_k])$. Such maps are a subset of the so-called non-coherence-generating-and-detecting dynamics considered in~\cite{milz2020non} to characterize the classicality of a Markovian process.
\end{remark}

\section{Coherence-assisted backflow of information}
\label{sec:BFI_interpretation}

In the following, we consider a $P$-divisible quantum  evolution $\Lambda_t$ and its classical reduction obtained through~\eqref{classicaldyn-t} in order to connect the property of $P$-divisibility of the so-obtained classical process with the notion of information flows.

In the so-called $BLP$ approach to quantum non-Markovianity~\cite{BLP}, the Holevo-Helstrom distinguishability is interpreted as ``internal information'' of a quantum system as embodied by two states $\rho$ and $\sigma$ under the dynamics $\Lambda_t$:
\begin{equation}\label{internal_information}
	\mathcal{I}^q_t(\rho,\sigma;\mu)=\normt{\Lambda_t[\Delta_\mu(\rho,\sigma)]} \,,
\end{equation}
where
\begin{equation}
    \label{HelstromMAT}
    \Delta_\mu(\rho,\sigma)=\mu \rho-(1-\mu)\sigma
    \end{equation}
is the so-called Helstrom matrix, with $\mu\in [0,1]$ being the prior probability in preparing $\rho$ and $\sigma$, so that \eqref{internal_information} reduces for $\mu=1/2$ to the standard trace distance
\begin{equation}
\label{traceDIST}
D(\rho,\sigma)=\frac{1}{2}\normt{\rho-\sigma}\ .
\end{equation}
In the classical setting as in Section~\ref{qcsem:sec}, the Helstrom matrix becomes $\bs{\delta}_\mu(\bs{p},\bs{q})=\mu \bs{p}-(1-\mu)\bs{q}$, with $\bs{p},\bs{q}$ probability vectors, and its trace norm reduces to the so-called Kolmogorov distance
\begin{equation}
\norm{\bs{\delta}_\mu(\bs{p},\bs{q})}_{\ell_1}=\sum_i|\mu p_i-(1-\mu)\,q_i|\,,
\end{equation}
with respect to the $\ell_1$-norm $\norm{\bs{x}}_{\ell_1}=\sum_i \abs{x_i}$. Therefore, the internal information
$\mathcal{I}^{cl}_t(\boldsymbol{p},\boldsymbol{q};\mu)$ of the classical reduction is given by
\begin{equation}
 \mathcal{I}^{cl}_t(\boldsymbol{p},\boldsymbol{q};\mu)
=\norm{T(t) \bs{\delta}_\mu(\bs{p},\bs{q})}_{\ell_1}\ .
\end{equation}
For invertible quantum dynamics, $P$-divisibility is known to be equivalent \cite{CKR} to the monotonicity in time of $\mathcal{I}^q_t(\rho,\sigma;\mu)$,
\begin{equation}\label{BLP_condition}
	\partial_t \normt{\Lambda_t[\Delta_\mu(\rho, \sigma)]} \le 0\,,
\end{equation}
for any choice of the initial Helstrom matrix (not true in the non invertible case~\cite{Rivas_counterexample}). 
The typical interpretation of~\eqref{BLP_condition} is that information about the initial distinguishability of any pair of the system states is lost by the dissipative effect of the environment and never retrieved.
Similarly, for classical invertible dynamics, $P$-divisibility is equivalent to
\begin{equation}\label{classical_noBFI}
	\partial_t \norm{T(t)\boldsymbol{\delta}_\mu}_{\ell_1} \le 0\,,
\end{equation}
for all choices of $\boldsymbol{\delta}_\mu \in \mathbb{R}^d$.
We stress that lack of $P$-divisibility straightforwardly implies  a violation of \eqref{classical_noBFI} only if $T(t)$ is invertible.
Furthermore, in the usual open party scenario, a system is immersed in an environment and the compound state then evolves unitarily, $\rho_{SE}(t)=\mathcal{U}^{SE}_t[\rho\otimes\rho_E]$, so that the distinguishability of two system-environment states is constant and one defines accordingly an ``external information'' relative to two states $\rho_S$ and $\sigma_S$ of a quantum system $S$ as
\begin{align*}
 &\mathcal{E}^q_t(\rho_S,\sigma_S;\mu)\\
 &=\normt{\mathcal{U}_t^{SE}[\Delta_{\mu}(\rho_S,\sigma_S)\otimes\rho_E]}
 -\mathcal{I}^q_t(\rho_S,\sigma_S;\mu)\\
 &=\normt{\Delta_{\mu}(\rho_S,\sigma_S)\otimes\rho_E}
 -\mathcal{I}^q_t(\rho_S,\sigma_S;\mu)\ .
\end{align*}
Indeed, the trace-norm is invariant under unitary transformations, yielding
$\dot{\mathcal{E}^q_t}(\rho_S,\sigma_S;\mu)=-\dot{\mathcal{I}^q_t}(\rho_S,\sigma_S;\mu)$. Thus,
an increase of the internal information leads necessarily to a decrease of the external one. In the $BLP$ approach, such a recovery of distinguishability  is interpreted as backflow of information from the environment to the system. Such an interpretation, which involves the degrees of freedom of the environment, is also supported by the observation~\cite{ColloquiumBreuer,GTD_bound,HolevoSkew} that the difference of internal quantum informations at different times,
$$
\Delta \mathcal{I}^q_{t,s}(\rho_S,\sigma_S;\mu):=\mathcal{I}^q_t(\rho_S,\sigma_S;\mu)-\mathcal{I}^q_s(\rho_S,\sigma_S;\mu)\ ,
$$
can be upper-bounded by means of trace distances~\eqref{traceDIST} involving the marginal density matrices
$\rho_{S}(t)=\Tr_E\rho_{SE}(t)$, $\rho_{E}(t)=\Tr_S\rho_{SE}(t)$. Namely,
\begin{align}\label{boundnormal}
	\Delta \mathcal{I}^q_{t,s} \le& \,\, 2\mu\, D(\rho_{SE}(s),\rho_S(s)\otimes \rho_E(s))  \nonumber\\&+2(1-\mu) D(\sigma_{SE}(s),\sigma_S(s)\otimes \sigma_E(s))\,\\
	&+2\min\{\mu,1-\mu\}D(\rho_{E}(s), \sigma_E(s))\ .  \nonumber
\end{align}
It follows that $\Delta\mathcal{I}^q_{t,s}(\rho_S,\sigma_S;\mu)>0$ can only occur if system environment-correlations are present at time $s$ or if the environment marginals differ at time $s$.  \par
Let us now consider the decohering map $\mathbb{P}$ in~\eqref{decoheringmap1} and a Helstrom matrix $\Delta_\mu(\rho_{\boldsymbol{p}},\sigma_{\boldsymbol{q}})\in\mathcal{P}$ with
$\rho_{\boldsymbol{p}}=\sum_i p_i P_i$ and $\sigma_{\boldsymbol{q}}=\sum_i q_i P_i$.

Although the chosen density matrices commute and are incoherent classical states with respect to $\mathcal{P}$, the dynamics $\Lambda_t$ takes them out of the
commutative algebra $\mathcal{P}$ and generates non-vanishing coherences, while the classical reduction to $\mathcal{P}$ eliminates them. Notice that the classical internal information of the classical reduction to $\mathcal{P}$ of the quantum dynamics reads
\begin{equation}\label{classicalInternalinfo}
\mathcal{I}_{t}^{cl}(\boldsymbol{p},\boldsymbol{q};\mu)=\norm{T(t)\boldsymbol{\delta}_\mu(\boldsymbol{p},\boldsymbol{q})}_{\ell_1}=\normt{\mathbb{P}\Lambda_t[\Delta_\mu(\rho_{\boldsymbol{p}},\sigma_{\boldsymbol{q}})]}\ .
	\end{equation}
Let us then introduce the difference between the quantum and classical internal information relative to two classical states: \begin{eqnarray}
\nonumber
&&\hskip-.5cm
\mathcal{C}_t(\boldsymbol{p},\boldsymbol{q};\mu)= \mc{I}^q_t(\rho_{\boldsymbol{p}},\rho_{\boldsymbol{q}};\mu)-\mc{I}^{cl}_t(\boldsymbol{p},\boldsymbol{q};\mu)\\
\label{coherent_information}
&&		
=\normt{\Lambda_t[\Delta_\mu(\rho_{\boldsymbol{p}},\sigma_{\boldsymbol{q}})]}-\normt{\mathbb{P}\Lambda_t[\Delta_\mu(\rho_{\boldsymbol{p}},\sigma_{\boldsymbol{q}})]}\ .
\end{eqnarray}
Since $\mathbb{P}$ is a contraction, the quantity $\mathcal{C}_t(\boldsymbol{p},\boldsymbol{q};\mu)$ cannot be negative; moreover, it allows to write the quantum internal information of two quantumly evolving classical states as
\begin{equation}
\label{cohintinf}
\mathcal{I}^q_t(\rho_{\boldsymbol{p}},\rho_{\boldsymbol{q}};\mu)=\mathcal{I}_t^{cl}(\boldsymbol{p},\boldsymbol{q};\mu)+
\mathcal{C}_t(\boldsymbol{p},\boldsymbol{q};\mu)\ .
\end{equation}
It is thus appropriate to name $\mathcal{C}_t(\boldsymbol{p},\boldsymbol{q};\mu)$ \textit{coherent internal information}. Then, the $P$-divisibility of $\Lambda_t$ implies that $\mathcal{I}^q_t(\rho_{\boldsymbol{p}},\rho_{\boldsymbol{q}};\mu)$
monotonically decreases in time, namely the information contained in the system leaks towards the environment and never comes back. Hence, for $t \ge s \ge 0$,
\begin{equation}\label{balance_cl}
	\mathcal{I}_t^{cl}(\boldsymbol{p},\boldsymbol{q};\mu)+
\mathcal{C}_t(\boldsymbol{p},\boldsymbol{q};\mu) \le  \mathcal{I}_s^{cl}(\boldsymbol{p},\boldsymbol{q};\mu)+
\mathcal{C}_s(\boldsymbol{p},\boldsymbol{q};\mu)\ .
\end{equation}
Moreover, one can upper-bound the variation of the classical internal information between times $s$ and $t\ge s$ as (see Appendix~\ref{app2:bound})\begin{eqnarray}
\nonumber
&&
\Delta \mc{I}_{t,s}^{cl}(\boldsymbol{p},\boldsymbol{q};\mu):=\mathcal{I}_t^{cl}(\boldsymbol{p},\boldsymbol{q};\mu)-\mathcal{I}_s^{cl}(\boldsymbol{p},\boldsymbol{q};\mu)\\
 &&\hskip0.8cm \le \, \mathcal{C}_s(\bs{p},\bs{q};\mu) \nonumber\\
 && \hskip 0.8cm \le \, \mu \, C_{\ell_1}(\Lambda_s[\rho_{\bs{p}}]) \, + \,(1-\mu)\, C_{\ell_1}(\Lambda_s[\rho_{\bs{q}}]) \,,
\label{bound2_coh}
\end{eqnarray}
where
$C_{\ell_1}$ denotes the so-called $\ell_1$ norm of coherence of a state~\cite{QuantifyingCoherence,ColloquiumCoherence},
\begin{equation}
C_{\ell_1}(\rho):=\sum_{i\ne j}\abs{\rho_{ij}} \,,
\end{equation}
with respect to the classical subalgebra $\mathcal{P}$. Therefore, $C_{\ell_1}(\Lambda_s[\rho_{\bs{p}}])$ and $C_{\ell_1}(\Lambda_s[\rho_{\bs{q}}])$ measure the amount of coherence produced by the dynamics at time $s$ acting on the diagonal states $\rho_{\bs{p}}$ and $\rho_{\bs{q}}$.

In analogy with~\eqref{boundnormal}, from~\eqref{bound2_coh}, we can thus interpret the revival of the classical internal information, $\Delta \mc{I}_{t,s}^{cl}(\boldsymbol{p},\boldsymbol{q};\mu)> 0$, between times $s$ and $t\geq s$ as a classical backflow of information, which can occur only if a certain degree of quantum coherence has been built up to time $s$ in the quantumly evolving classical states. Hence, in the classical reduction of the quantum evolution of pairs of commuting quantum states, quantum coherences play an information storing role as the environment does in the quantum scenario, as emerges by comparing the r.h.s of~\eqref{bound2_coh} and \eqref{boundnormal}.

\section{Unital qubit dynamics}\label{sec:unitalqubit}

In this Section, we investigate the $P$-divisibility of classical reductions of the dynamical map obtained by~\eqref{classicaldyn-t}. In particular, we will argue that the question of whether $P$-divisibility of the quantum process is inherited by its reduction  becomes particularly relevant when one considers purely dissipative dynamics with generators not containing  commutators with Hamiltonians (see~\eqref{GKSL}), which would provide non-divisible  classical processes.
Moreover, we shall mostly focus on qubit unital dynamics for which the $P$-divisibility of both $\Lambda_t$ and its classical reduction  $T(t)$ can be controlled.

Let $\Lambda_t$ be a unital qubit dynamics,  $\Lambda_t[\mathds{1}]=\mathds{1}$, and $P_0$, $P_1$ two orthogonal projectors generating a maximally Abelian sub-algebra $\mathcal{P}\subset M_2(\mathbb{C})$. As in Example \ref{example:classstochdoesntimplypos}, the structure of the classical process $T(t)$
obtained from $\Lambda_t$ by means of~\eqref{classicaldyn-t} is determined only by $T_{00}(t)$:
\begin{equation}\label{tdepstochasticmatrix}
	T(t)=\begin{pmatrix}
		T_{00}(t) & 1-T_{00}(t)\\
		1-T_{00}(t) & T_{00}(t)
	\end{pmatrix} ,
\end{equation}
where $T_{00}(t)=\Tr(P_0\Lambda_t[P_0])$. If $T_{00}(t)\ne\frac{1}{2}$, $T(t)$ is invertible and one computes the classical generator as
\begin{equation}
	\label{matgen}
	L(t)=\dot T(t) T(t)^{-1}=\frac{\dot T_{00}(t)}{2T_{00}(t)-1}\begin{pmatrix}
		1 & -1\\
		-1 & 1
	\end{pmatrix},
\end{equation}
Thus, from Proposition \ref{classicaldivis}, $L(t)$ generates of a divisible classical process iff
\begin{equation}\label{condition_classicalprocess0}
	f_t:=\frac{\dot T_{00}(t)}{2T_{00}(t)-1} \le 0\qquad \forall t\geq 0\ .
\end{equation}
As in Example \ref{example:classstochdoesntimplypos}, we thus introduce the $3\times 3$ matrix representation of $\Lambda_t$
\begin{equation}\label{Pauli_repDyn}
	\widetilde{\Lambda}_{ij}(t)=\frac{1}{2}\Tr(\sigma_i\Lambda_t[\sigma_j]),
\end{equation}
and analogously $\wt{\mathcal{L}}(t)$ for $\mathcal{L}_t$, satisfying the master equation
$	\dot{\widetilde{\Lambda}}(t)=\widetilde{\mathcal{L}}(t) \widetilde{\Lambda}(t)$,
whose formal solution is given by  a time-ordered exponentiation (see~\eqref{timeordering}):
\begin{equation}\label{timeordering_tilde}
	\widetilde{\Lambda}(t)=\mathcal{T}_\leftarrow e^{\int_0^t\dd{s}\widetilde{\mathcal{L}}(s)}.
\end{equation}
Using the Bloch representation of Example~\ref{example:classstochdoesntimplypos}, one rewrites
\begin{equation*} \label{rewritingmatricial}
	T_{00}(t)=\frac{1}{2}+\frac{1}{2}\mel{\boldsymbol{n}}{\widetilde{\Lambda}(t)}{\boldsymbol{n}}, \ \dot T_{00}(t)=\frac{1}{2}\mel{\boldsymbol{n}}{\widetilde{\mathcal{L}}(t)\widetilde{\Lambda}(t)}{\boldsymbol{n}}\ .
\end{equation*}
Then,~\eqref{condition_classicalprocess0} reads
\begin{equation}
\label{condition_classicalprocess}
	f_t=\frac{1}{2}\frac{\mel{\boldsymbol{n}}{\widetilde{\mathcal{L}}(t)\widetilde{\Lambda}(t)}{\boldsymbol{n}}}{\mel{\boldsymbol{n}}{\widetilde{\Lambda}(t)}{\boldsymbol{n}}} \le 0\,, \qquad\forall t\geq 0\ .
\end{equation}
Similarly, $\Lambda_t$ is $P$-divisible iff
\begin{equation}\label{Pdtilde}
	\Tr(P_0\mathcal{L}_t[P_1])=	-\frac{1}{2}\mel{\boldsymbol{n}}{\widetilde{\mathcal{L}}(t)}{\boldsymbol{n}}\ge0\qquad \forall \boldsymbol{n}\in\mathbb{R}^3\ ,
\end{equation}
namely iff the symmetric part of the generator in the Bloch representation has negative eigenvalues
\begin{equation}\label{P-divunital}
	-(\widetilde{\mathcal{L}}(t)+\widetilde{\mathcal{L}}^T(t)) \ge 0\,.
\end{equation}

\par
In the following, we consider the behaviour of $f_t$ under two oppositely behaving dynamics.
The first one is a purely unitary qubit rotation which  cannot give rise to a classically divisible process, while the second is that of a purely dissipative Pauli dynamics for which $P$-divisibility of the dynamics is equivalent to that of its classical reduction.

\begin{example}\label{ex:consthamiltonian}

Consider the qubit Hamiltonian $H=1/2\boldsymbol{\omega}\cdot\boldsymbol{\sigma}$, $\boldsymbol{\omega}\in\mathbb{R}^3$, and the evolution $\mathcal{U}_t[\rho]=U_t \,\rho \, \daga{U}_t$, $U_t=e^{-itH}$.
The classical reduction of its generator $\mathcal{L}[\cdot]=-i[H\,,\cdot\,]=0$ to any commutative sub-algebra $\mathcal{P}\subset M_2(\mathbb{C})$ vanishes. Indeed, $\Tr(PHQ)=\Tr(PQH)=0$. However, the Bloch representation of  $\mathcal{U}_t$ acts as $3\times3$ rotation matrix
$\wt{\mathcal{U}}(t)=e^{t\wt{\mathcal{L}}}$ with generator
\begin{equation}
	\wt{\mathcal{L}}=\omega\,\begin{pmatrix}
		0 & -r_3 & r_2 \\
		r_3 & 0  & -r_1 \\
		-r_2 & r_1  & 0 \\
	\end{pmatrix}\ ,
\end{equation}
where $\omega=\|\boldsymbol{\omega}\|$ and $\boldsymbol{r}=(\omega_1,\omega_2,\omega_3)/\omega\in\mathbb{R}^3$ is a unit vector.
Since $\wt{\mathcal{L}}^3=-\omega^2 \wt{\mathcal{L}}=\omega^2\wt{\mathcal{L}}^T$, one rewrites
\begin{equation}\label{SO3_Rodrigues}
\wt{\mathcal{U}}(t)=e^{t\wt{\mathcal{L}}}=\mathds{1}+\frac{\sin(\omega t)}{\omega}\,\wt{\mathcal{L}}+\frac{1-\cos(\omega t)}{\omega^2}\, \wt{\mathcal{L}}^2\ .
\end{equation}

Also, $\mel{\boldsymbol{n}}{\wt{\mathcal{L}}}{\boldsymbol{n}}=0$,
$\mel{\boldsymbol{n}}{\wt{\mathcal{L}}^{2}}{\boldsymbol{n}}=-\|\wt{\mathcal{L}}\ket{\boldsymbol{n}}\|^2$.
Given a projector with Bloch vector $\boldsymbol{n}$, with $\theta$ the angle between the latter and $\boldsymbol{r}$, one has $\|\wt{\mathcal{L}}\ket{\boldsymbol{n}}\|=\omega \sin(\theta)$ and
\begin{eqnarray*}
T_{00}(t)&=&\Tr\Big(P_0\,\mathcal{U}_t[P_0]\Big)=\frac{1}{2}\Big(1+\mel{\boldsymbol{n}}{\wt{\mathcal{U}_t}}{\boldsymbol{n}}\Big)
\\
&=&\frac{1}{2}(1+\cos^2(\theta)+\cos(\omega t)\sin^2(\theta))\ ,
\end{eqnarray*}
so that \eqref{condition_classicalprocess} is rewritten as follows,
\begin{equation}\label{unitary_f}
	f_t=-\frac{\omega}{2}\frac{\sin(\omega t)\sin^2(\theta)}{\cos^2(\theta)+\cos(\omega t)\sin^2(\theta)}\ .
\end{equation}
Thus, the sign of $f_t$ changes for all $\theta\in(0,\pi)$, unless $\theta=0,\pi$ (when $\ket{\boldsymbol{n}}$ is an eigenstate of $H$) yielding $f_t=0$.
Notice that the denominator in $f_t$ is the determinant of the classical stochastic matrix $T(t)$ which is thus
invertible if
 $\cos(2\theta)>0$, namely, if $\theta \in (0,\pi/4)\cup(3\pi/4,\pi)$. In this case, classical $P$-divisibility breaks when $\dot T_{00}(t)$ becomes positive. 
 As described in Section \ref{sec:BFI_interpretation}, such loss of $P$-divisibility for an invertible classical reduction $T(t)$ can be interpreted in terms of $BFI$.
\end{example}

\begin{example}
\label{ex:Pauli_dyn}
Let $\Lambda_t$ be the Pauli dynamics generated by
\begin{equation}\label{Pauligenerator}
	\mathcal{L}_t[\rho]=\frac{1}{2}\sum_{k=1}^3 \gamma_k(t)(\sigma_k\rho\sigma_k-\rho).
\end{equation}
Notice that  ${[\mathcal{L}_t,\mathcal{L}_s]=0}$, yielding the exponential solution  $\Lambda_t=e^{\int_0^t\dd{s}\mathcal{L}_s}$. Moreover, the representation in the Pauli representation of the generator is diagonal,
\begin{equation}
\widetilde{\mathcal{L}}_{ij}(t)[\rho]=-\Gamma_i(t) \,\delta_{ij}\,, \qquad \Gamma_i(t)=\sum_{k\ne i} \gamma_k(t)\,,
\end{equation}
and necessary and sufficient for $P$-divisibility follow from~\eqref{P-divunital},
\begin{equation}\label{pauliPdivisibility}
 \Gamma_k (t)\ge0 \,, \quad k=1,2,3\,.	
\end{equation}
$\widetilde{\Lambda}(t)$ is also diagonal, with strictly positive eigenvalues
$$
\widetilde{\Lambda}_{ij}(t)=\lambda_i(t) \delta_{ij} \,,\qquad \lambda_i(t)=e^{-\int_{0}^t \dd{s} \Gamma_i(t)}\ .
$$
so that
\begin{align}
\label{fPt}
f_t=\frac{1}{2}	\frac{\mel{\boldsymbol{n}}{\widetilde{\mathcal{L}}(t)\widetilde{\Lambda}(t)}{\boldsymbol{n}}}{\mel{\boldsymbol{n}}{\widetilde{\Lambda}(t)}{\boldsymbol{n}}}=-\frac{1}{2}\frac{\sum_i \Gamma_i(t) \lambda_i(t) n_i^2 }{\sum_j \lambda_j(t) n_j^2} \le 0
\end{align}
for all $\boldsymbol{n}$.
Thus, $T(t)$  is $P$-divisible for all choices of the reference basis iff $\Lambda_t$ is $P$-divisible.
\end{example}

Notice that, for $\gamma_i(t)=const$, the Pauli generator includes the case of a unital purely dissipative GKSL evolution (the time-independent version of \eqref{purely_dissipativeunital}), which thus yields $P$-divisible classically reduced dynamics.

\begin{example}
	\label{rem3}
	Adding a Hamiltonian to a Pauli generator $\mathcal{L}_t^P$ as in~\eqref{Pauligenerator} may generally lead to a positive $f_t$, so that the process $T(t)$ is not divisible. Indeed, consider
	\begin{equation}
		\mathcal{L}_t[\rho]=-i[\sigma_z,\rho]+\mathcal{L}_t^P[\rho],
	\end{equation}
	and take, for the sake of simplicity, $\Gamma_1(t)=\Gamma_2(t)\equiv\Gamma(t)$, so that the Hamiltonian generator commutes with the Pauli one. Hence, the evolution will have the form
	\begin{equation}
    \Lambda_t=\mathcal{U}_t\,\Lambda_t^P \implies 	\widetilde{\Lambda}(t)=
    \wt{\mathcal{U}}(t)\,\widetilde{\Lambda}^P(t)\,.
	\end{equation}
		Using~\eqref{condition_classicalprocess} and~\eqref{SO3_Rodrigues}, it follows that
	$f_t=f_t^P+f_t^H$,
	where $f_t^P$ is as in~\eqref{fPt} and always negative, while $f_t^H$ is as in~\eqref{unitary_f} and generally oscillates between positive and negative values. In particular, $P$-divisibility is lost due to the divergence of the $f_t^H$ for some $\bs{n}$. On the other hand, for those $\bs{n}$ that lead to an invertible $T(t)$, $f_t^H$ is bounded, so that $f_t$ can stay negative for all times  for large enough Pauli rates $\Gamma_i(t)$.
\end{example}

The previous examples suggest that the presence of the commutator in the generator is in general responsible for the loss of $P$-divisibility for the classically reduced process defined by~\eqref{classicaldyn-t}, while the classical reduction of
Pauli dynamics preserves $P$-divisibility.
Therefore, in order to study which $P$-divisible quantum dynamics $\Lambda_t$ give certainly rise to classically divisible processes defined by~\eqref{classicaldyn-t}, one better focus upon purely dissipative generators.

From~\eqref{GKSL}, it follows that the time-local generator of a trace and Hermiticity preserving, qubit dynamics can be always expressed in the form~\cite{GoriniKossSud}
\begin{equation}\label{timedep_qubit}
	\mathcal{L}_t[\rho]=-i[H(t),\rho]+\mathcal{D}_t[\rho],
\end{equation}
with dissipative part
\begin{equation}\label{timedependent_qubitgenerator}
	\mathcal{D}_t[\rho]=\frac{1}{2}\sum_{i,j=1}^3 K_{ij}(t)\left(\sigma_i\rho\sigma_j-\frac{1}{2}\acomm{\sigma_j\,\sigma_i}{\rho}\right) \ .
\end{equation}

For a unital qubit dynamics
the following then holds.

\begin{lemma}\label{lemma1}
	A qubit dynamics generated by $\mathcal{L}_t$ is unital iff the Kossakowski matrix $K(t)$ is real and symmetric, so that $\mathcal{D}_t$ can be written in the diagonal form
	\begin{equation}\label{purely_dissipativeunital}
		\mathcal{D}_t[\rho]=\frac{1}{2}\sum_i \gamma_i(t)\left(\sigma_i(t)\rho\sigma_i(t)-\rho\right),
	\end{equation}
	with respect to a set of matrices (generally time dependent) $\{\sigma_i(t)\}_{i=1}^3$ that fulfil the Pauli algebra. Moreover, the unital dynamics is purely dissipative, namely $\mathcal{L}_t=\mathcal{D}_t$, iff the time-independent Bloch-representation of the generator
	\begin{equation}
		\label{PaulirepL}
		\widetilde{\mathcal{L}}_{ij}(t)=\frac{1}{2}\Tr(\sigma_i\mathcal{L}_t[\sigma_j]),\quad i,j=1,2,3\ ,
	\end{equation}
	consists of a symmetric matrix.
\end{lemma}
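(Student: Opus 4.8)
The plan is to establish the two equivalences separately, reducing everything to the commutation relation $[\sigma_i,\sigma_j]=2i\sum_k\varepsilon_{ijk}\sigma_k$ and the anticommutation relation $\{\sigma_i,\sigma_j\}=2\delta_{ij}\mathds{1}$. For the first equivalence, I would start from the observation that, since $\Lambda_0=\mathrm{id}$ and $\dot\Lambda_t=\mathcal{L}_t\Lambda_t$ as in~\eqref{eq0}, the dynamics is unital, $\Lambda_t[\mathds{1}]=\mathds{1}$ for all $t\ge0$, if and only if $\mathcal{L}_t[\mathds{1}]=0$ for all $t\ge0$ (differentiate in one direction; invoke uniqueness for the linear master equation in the other). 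Since $[H(t),\mathds{1}]=0$, from~\eqref{timedep_qubit} one has $\mathcal{L}_t[\mathds{1}]=\mathcal{D}_t[\mathds{1}]$, and~\eqref{timedependent_qubitgenerator} gives
\begin{equation*}
\mathcal{D}_t[\mathds{1}]=\frac{1}{2}\sum_{i,j=1}^3 K_{ij}(t)\,[\sigma_i,\sigma_j]=i\sum_{k=1}^3\Big(\sum_{i,j=1}^3\varepsilon_{ijk}\,K_{ij}(t)\Big)\sigma_k\,.
\end{equation*}
By linear independence of the $\sigma_k$, this vanishes iff $\sum_{ij}\varepsilon_{ijk}K_{ij}(t)=0$ for $k=1,2,3$, i.e. iff the antisymmetric part of $K(t)$ is zero; since $K(t)=K^\dagger(t)$ by hypothesis, this is equivalent to $K(t)$ being real and symmetric.

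Assuming $K(t)$ real symmetric, I would then diagonalize it by a proper rotation $O(t)\in SO(3)$, writing $K_{ij}(t)=\sum_k O_{ki}(t)\,\gamma_k(t)\,O_{kj}(t)$ with real eigenvalues $\gamma_k(t)$, and set $\sigma_k(t):=\sum_i O_{ki}(t)\,\sigma_i$. Because $O(t)$ is a \emph{proper} rotation, the matrices $\sigma_k(t)$ obey the Pauli algebra, in particular $\sigma_k(t)^2=\mathds{1}$; inserting the decomposition into~\eqref{timedependent_qubitgenerator} and using $\tfrac12\{\sigma_k(t)^2,\rho\}=\rho$ reproduces~\eqref{purely_dissipativeunital}. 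Conversely,~\eqref{purely_dissipativeunital} is of the form~\eqref{timedependent_qubitgenerator} with $K(t)=O(t)^{T}\,\mathrm{diag}\big(\gamma_1(t),\gamma_2(t),\gamma_3(t)\big)\,O(t)$ real symmetric, and $\mathcal{D}_t[\mathds{1}]=\tfrac12\sum_k\gamma_k(t)(\mathds{1}-\mathds{1})=0$, so the dynamics is unital. (No regularity of $O(t)$, $\gamma_k(t)$ in $t$ is required, in accordance with the wording ``generally time dependent''.) This settles the first equivalence.

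For the ``moreover'' part I would decompose the Bloch representation~\eqref{PaulirepL} as $\widetilde{\mathcal{L}}(t)=\widetilde{\mathcal{D}}(t)+\widetilde{\mathcal{H}}(t)$, where $\widetilde{\mathcal{D}}_{ij}(t)=\tfrac12\Tr(\sigma_i\mathcal{D}_t[\sigma_j])$ and $\widetilde{\mathcal{H}}_{ij}(t)=-\tfrac{i}{2}\Tr(\sigma_i[H(t),\sigma_j])$ are the Bloch matrices of $\mathcal{D}_t$ and of $-i[H(t),\,\cdot\,]$, respectively. Writing $H(t)=h_0(t)\mathds{1}+\tfrac12\boldsymbol{h}(t)\cdot\boldsymbol{\sigma}$, a direct computation with the commutation relations yields $\widetilde{\mathcal{H}}_{ij}(t)=\sum_k h_k(t)\,\varepsilon_{kji}$, an antisymmetric matrix which vanishes iff $\boldsymbol{h}(t)=\boldsymbol{0}$, i.e. iff $H(t)\propto\mathds{1}$, i.e. iff $-i[H(t),\,\cdot\,]\equiv0$. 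On the other hand, in the diagonal form~\eqref{purely_dissipativeunital} the dissipator $\mathcal{D}_t$ is a real combination of the maps $\rho\mapsto\sigma_k(t)\,\rho\,\sigma_k(t)$ and $\rho\mapsto-\rho$, each of which is self-dual with respect to the Hilbert--Schmidt inner product (the $\sigma_k(t)$ being Hermitian); hence $\mathcal{D}_t$ is self-dual, its matrix in the Hilbert--Schmidt-orthonormal basis $\{\sigma_\alpha/\sqrt{2}\}_{\alpha=0}^{3}$ is real symmetric, and so is its $3\times3$ block $\widetilde{\mathcal{D}}(t)$. Since a real matrix decomposes uniquely into symmetric plus antisymmetric parts, $\widetilde{\mathcal{L}}(t)=\widetilde{\mathcal{L}}^{T}(t)$ holds iff $\widetilde{\mathcal{H}}(t)=0$, iff $\mathcal{L}_t=\mathcal{D}_t$, which is the second claim.

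The only step that is not routine bookkeeping is the symmetry of $\widetilde{\mathcal{D}}(t)$: one has to notice that Hermiticity of $K(t)$ by itself does not make $\mathcal{D}_t$ self-dual, whereas the reality (equivalently, symmetry) of $K(t)$ forced by unitality does, after which symmetry of $\widetilde{\mathcal{D}}(t)$ is immediate from working in a Hilbert--Schmidt-orthonormal basis. All remaining manipulations use only the Pauli commutation and anticommutation identities and the orthogonality of $O(t)$.
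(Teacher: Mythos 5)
Your proof is correct and follows essentially the same route as the paper's: unitality is reduced to $\mathcal{L}_t[\mathds{1}]=\mathcal{D}_t[\mathds{1}]=\tfrac12\sum_{ij}K_{ij}(t)[\sigma_i,\sigma_j]=0$, forcing $K(t)$ real symmetric and hence diagonalizable by an orthogonal $O(t)$ whose action preserves the Pauli algebra, while the ``moreover'' part rests on the Bloch matrix of the dissipator being symmetric (your self-duality argument is just the paper's cyclicity-of-the-trace observation in disguise) and that of $-i[H(t),\cdot]$ being antisymmetric. Your explicit remark that $O(t)$ should be taken in $SO(3)$ to preserve the full Pauli commutation relations is a small refinement the paper glosses over, but it changes nothing of substance.
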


\begin{proof}
	Let us recall that $\Lambda_t$ is unital if and only if $\mc{L}_t[\mathds{1}]=0$ for all $t \ge 0$. Indeed, if $\Lambda_t$ is unital, $0=\dot\Lambda_t[\mathds{1}]=\mc{L}_t\Lambda_t[\mathds{1}]=\mc{L}_t[\mathds{1}]$; conversely, if the generator kills the identity at all times, unitality follows from the Dyson series~\eqref{timeordering}. Hence, in the qubit case, plugging the identity into~\eqref{timedependent_qubitgenerator}
$\Lambda_t$ is unital iff
	\begin{equation}
		\sum_{i,j=1}^3K_{ij}(t)[\sigma_i,\sigma_j]=0,
	\end{equation}
	which is satisfied iff $K_{ij}(t)=K_{ji}(t)={K_{ij}^*}(t)$. Then, the real symmetric Kossakowski matrix $K(t)$ can be diagonalized by means of an orthogonal transformation,
	$$
	K(t)=O(t)\gamma(t)O(t)^T\ ,\quad  \gamma(t)=\textrm{diag}\{\gamma_1(t),\gamma_2(t),\gamma_3(t)\}\ .
	$$
	Therefore, $O(t)O^T(t)=O^T(t)O(t)=\mathds{1}$ guarantees that the matrices $\sigma_i(t)=\sum_j O_{ji}(t)\sigma_j$ satisfy the Pauli algebra.
	Moreover, the Bloch representation~\eqref{PaulirepL} of the dissipative part $\mathcal{D}_t$ of the generator reads
	\begin{align}
		\label{PaulirepD}
		\widetilde{\mathcal{D}}_{ij}(t)
		&=\frac{1}{2}\sum_k\gamma_k(t)\left[ \Tr\left(\sigma_i\sigma_k(t)\sigma_j\sigma_k(t)\right)-2\delta_{ij} \right]\nonumber\\
		&=\widetilde{\mathcal{D}}_{ji}(t)\ ,
	\end{align}
	where the last equality follows from the cyclicity property of the trace. Thus, an anti-symmetric contribution to the matrix representation of the full generator $\mathcal{L}(t)$ can only come from the commutator with the Hamiltonian. It follows that the unital dynamics is purely dissipative,  namely that $\widetilde{\mathcal{L}}(t)=\widetilde{\mathcal{D}}(t)$,  iff ${\widetilde{\mathcal{L}}(t)=\widetilde{\mathcal{L}}^T(t)}$.
\end{proof}

\begin{remark}
\label{remselfdual}
The dual, $\Lambda^\ddagger$, of a state-transformation $\Lambda$ is defined as
$$
\Tr(X\Lambda[\rho])=\Tr(\Lambda^\ddagger[X]\rho),
 $$
for all system states $\rho$ and system operators $X$. Since $\Lambda$ is trace-preserving $\Lambda^\ddagger$ is unital
$\Lambda^\ddagger[\mathds{1}]=\mathds{1}$. A state-transformation $\Lambda$ is self-dual iff $\Lambda=\Lambda^\ddagger$ so that
it is necessarily unital, while the reverse is obviously not true.
When $\Lambda_t$ is a qubit dynamical map, going to the Bloch representation, one then has that self-duality is equivalent to
  $\wt{\Lambda}(t)=\wt{\Lambda}^T(t)$. On the other hand, if the generator $\mathcal{L}_t$ of the qubit dynamics $\Lambda_t$
  is self-dual, $\mc{L}_t^{\phantom{\ddagger}}=\mc{L}_t^\ddagger$, it must be purely dissipative, $\mathcal{L}_t=\mathcal{D}_t$, and $\Lambda_t$ unital.
Indeed,  $\mc{L}_t[\mathds{1}]=\mc{L}_t^\ddagger[\mathds{1}]=0$ yields a symmetric Kossakowski matrix which in turn implies
the absence of the commutator with a Hamiltonian. The unitality of $\Lambda_t$ then follows from its Dyson expansion.
Moreover, in the qubit case, Lemma \ref{lemma1} implies that a purely dissipative generator of a unital dynamics is necessarily self-dual. In the following Section \ref{sec:class}, we shall provide a concrete qubit construction for a non self-dual unital dynamics arising from a self-dual generator. In particular, this cannot occur if the solution is of the form $\Lambda_t=e^{\int_0^t\dd{s}\mc{L}_s}$.
\end{remark}

It follows that if $\Lambda_t$ is unital and purely dissipative, $\widetilde{\mathcal{L}}(t)$ is real symmetric and then, from \eqref{P-divunital}, the $P$-divisibility of $\Lambda_t$ becomes equivalent to
$-\wt{\mathcal{L}}(t)\geq 0$. The result for Pauli dynamics can be then generalized to self-dual qubit dynamics with self-dual generators, for which $P$-divisibility is preserved by their classical reduction.
\begin{proposition}\label{prop_selfduality}
	Let $\Lambda_t^{\phantom{\ddagger}}=\Lambda_t^{\ddagger}$ be a self-dual, purely dissipative, invertible  qubit dynamics.
	Then, the associated classical stochastic process $T(t)$ is divisible if and only if $\Lambda_t$ is $P$-divisible.
\end{proposition}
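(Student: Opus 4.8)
The plan is to reduce everything to the Bloch representation used throughout Section~\ref{sec:unitalqubit} and to exploit self-duality, which by Remark~\ref{remselfdual} forces $\widetilde{\mathcal{L}}(t)$ to be real symmetric. First I would recall that for a self-dual qubit dynamics $\widetilde{\Lambda}(t)=\widetilde{\Lambda}^T(t)$ is real symmetric as well; since the dynamics is purely dissipative and unital, $\widetilde{\Lambda}(t)$ can be written $\widetilde{\Lambda}(t)=O(t)\,\mathrm{diag}(\lambda_1(t),\lambda_2(t),\lambda_3(t))\,O(t)^T$ with $O(t)\in SO(3)$ and real eigenvalues $\lambda_i(t)$. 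The key observation is that, because $\Lambda_t$ and $\mathcal{L}_t$ need not commute, $\widetilde{\mathcal{L}}(t)$ and $\widetilde{\Lambda}(t)$ need \emph{not} be simultaneously diagonalizable — this is exactly the difference with the Pauli case of Example~\ref{ex:Pauli_dyn} — so the argument cannot simply copy \eqref{fPt}.

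The heart of the proof is the criterion \eqref{condition_classicalprocess}: $T(t)$ is divisible iff $f_t=\tfrac12\,\mel{\boldsymbol n}{\widetilde{\mathcal{L}}(t)\widetilde{\Lambda}(t)}{\boldsymbol n}/\mel{\boldsymbol n}{\widetilde{\Lambda}(t)}{\boldsymbol n}\le 0$ for all unit $\boldsymbol n$ and all $t$; and $\Lambda_t$ is $P$-divisible iff $-\widetilde{\mathcal{L}}(t)\ge0$ by \eqref{P-divunital} together with the symmetry of $\widetilde{\mathcal{L}}(t)$. The ``if'' direction is the substantive one. Assume $-\widetilde{\mathcal{L}}(t)\ge0$; I want to deduce $\mel{\boldsymbol n}{\widetilde{\mathcal{L}}(t)\widetilde{\Lambda}(t)}{\boldsymbol n}\,\mel{\boldsymbol n}{\widetilde{\Lambda}(t)}{\boldsymbol n}\le 0$. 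The natural route is to relate $\widetilde{\Lambda}(t)$ and $\widetilde{\mathcal{L}}(t)$ via the master equation $\dot{\widetilde{\Lambda}}=\widetilde{\mathcal{L}}\widetilde{\Lambda}$: then $\mel{\boldsymbol n}{\widetilde{\mathcal{L}}(t)\widetilde{\Lambda}(t)}{\boldsymbol n}=\mel{\boldsymbol n}{\dot{\widetilde{\Lambda}}(t)}{\boldsymbol n}=\tfrac{d}{dt}\mel{\boldsymbol n}{\widetilde{\Lambda}(t)}{\boldsymbol n}$, i.e. $f_t=\tfrac14\,\partial_t\log\!\big(2T_{00}(t)-1\big)^2$ once one notes $\mel{\boldsymbol n}{\widetilde{\Lambda}(t)}{\boldsymbol n}=2T_{00}(t)-1$. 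So divisibility of $T(t)$ is equivalent to $\big|\mel{\boldsymbol n}{\widetilde{\Lambda}(t)}{\boldsymbol n}\big|$ being nonincreasing in $t$ for every $\boldsymbol n$. I would then prove that $-\widetilde{\mathcal{L}}(t)\ge 0$ implies $t\mapsto \big\|\widetilde{\Lambda}(t)\boldsymbol m\big\|$ is nonincreasing for every fixed vector $\boldsymbol m$: compute $\tfrac{d}{dt}\|\widetilde{\Lambda}(t)\boldsymbol m\|^2 = 2\,\langle \widetilde{\Lambda}(t)\boldsymbol m\,|\,\widetilde{\mathcal{L}}(t)\,\widetilde{\Lambda}(t)\boldsymbol m\rangle\le 0$ because $\widetilde{\mathcal{L}}(t)$ is symmetric and negative semidefinite. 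Hence $\|\widetilde{\Lambda}(t)\|_{\mathrm{op}}$ is nonincreasing, and since $\widetilde{\Lambda}(t)$ is symmetric, $\big|\mel{\boldsymbol n}{\widetilde{\Lambda}(t)}{\boldsymbol n}\big|\le\|\widetilde{\Lambda}(t)\|_{\mathrm{op}}$ with equality attained along an eigenvector. The cleanest finish is: along the top eigenvector $\boldsymbol n$ of $\widetilde{\Lambda}(t)$ one has $\mel{\boldsymbol n}{\widetilde{\mathcal{L}}(t)\widetilde{\Lambda}(t)}{\boldsymbol n}=\ell_{\max}(t)\mel{\boldsymbol n}{\widetilde{\mathcal{L}}(t)}{\boldsymbol n}$ where $\ell_{\max}(t)$ is the corresponding eigenvalue of $\widetilde{\Lambda}(t)$ — but this naive step fails for the same non-commutation reason, so instead I compute directly, for an arbitrary unit $\boldsymbol n$, the sign of the numerator using the spectral decomposition of the symmetric operator $\widetilde{\Lambda}(t)$ and the negativity of $\widetilde{\mathcal{L}}(t)$: write $\widetilde{\Lambda}(t)\boldsymbol n=\sum_i \lambda_i(t)\langle e_i|\boldsymbol n\rangle e_i$ and bound $\mel{\boldsymbol n}{\widetilde{\mathcal L}(t)\widetilde\Lambda(t)}{\boldsymbol n}$ against $\mel{\boldsymbol n}{\widetilde\Lambda(t)}{\boldsymbol n}=\sum_i\lambda_i(t)|\langle e_i|\boldsymbol n\rangle|^2$ by a Cauchy--Schwarz / sign-chasing argument that works precisely because both operators are symmetric. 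For the ``only if'' direction I would argue contrapositively: if $-\widetilde{\mathcal L}(t_0)\ge 0$ fails, pick a unit $\boldsymbol n$ with $\mel{\boldsymbol n}{\widetilde{\mathcal L}(t_0)}{\boldsymbol n}>0$; since $\widetilde\Lambda(0)=\mathds 1$, at $t_0$ small this $\boldsymbol n$ also has $\mel{\boldsymbol n}{\widetilde\Lambda(t_0)}{\boldsymbol n}>0$, forcing $f_{t_0}>0$ and destroying divisibility — and for general $t_0$ one uses invertibility of $\Lambda_{t_0}$ (so $\widetilde\Lambda(t_0)$ is nonsingular) together with a perturbation of $\boldsymbol n$ to keep the denominator of fixed sign while making the numerator positive.

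The main obstacle I anticipate is exactly the failure of simultaneous diagonalizability of $\widetilde{\mathcal{L}}(t)$ and $\widetilde{\Lambda}(t)$: the slick Example~\ref{ex:Pauli_dyn}-style argument does not apply, and one must instead push through the inequality $\mel{\boldsymbol n}{\widetilde{\mathcal{L}}(t)\widetilde{\Lambda}(t)}{\boldsymbol n}\le 0$ whenever $\mel{\boldsymbol n}{\widetilde{\Lambda}(t)}{\boldsymbol n}\ge 0$ (and the reverse when the denominator is negative) using only that both are real symmetric and $\widetilde{\mathcal{L}}(t)\le 0$. The monotonicity-of-operator-norm observation $\tfrac{d}{dt}\|\widetilde{\Lambda}(t)\boldsymbol m\|^2\le 0$ is the tool that makes this clean, since it shows $2T_{00}(t)-1$ shrinks in modulus along \emph{every} direction, which is exactly the statement $f_t\le 0$. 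A minor additional care is needed at instants where $2T_{00}(t)=1$, i.e. where $T(t)$ is non-invertible: there the classical generator \eqref{matgen} is singular and divisibility must be understood as a limit, but the monotonicity of $|2T_{00}(t)-1|$ handles these points transparently.
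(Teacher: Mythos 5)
There is a genuine gap, and it sits exactly where you flagged difficulty. Your guiding premise --- that $\widetilde{\mathcal{L}}(t)$ and $\widetilde{\Lambda}(t)$ ``need not be simultaneously diagonalizable'' so the Pauli-style argument cannot be copied --- is false under the hypotheses of the proposition, and recognizing this is the whole point of the paper's proof. Self-duality of the map gives $\widetilde{\Lambda}(t)=\widetilde{\Lambda}^T(t)$ for all $t$, hence $\dot{\widetilde{\Lambda}}(t)$ is symmetric; pure dissipativeness gives $\widetilde{\mathcal{L}}(t)=\widetilde{\mathcal{L}}^T(t)$ (Lemma~\ref{lemma1}); transposing $\dot{\widetilde{\Lambda}}=\widetilde{\mathcal{L}}\,\widetilde{\Lambda}$ then yields $\widetilde{\mathcal{L}}\,\widetilde{\Lambda}=\widetilde{\Lambda}\,\widetilde{\mathcal{L}}$. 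With this commutativity, invertibility (which keeps the eigenvalues of $\widetilde{\Lambda}(t)$ positive since they start at $1$), and $-\widetilde{\mathcal{L}}(t)\ge0$, one writes $\widetilde{\mathcal{L}}\,\widetilde{\Lambda}=\sqrt{\widetilde{\Lambda}}\,\widetilde{\mathcal{L}}\,\sqrt{\widetilde{\Lambda}}\le0$ and both directions of the equivalence drop out of \eqref{condition_classicalprocess} at once.

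Your substitute argument does not close the gap. The computation $\tfrac{d}{dt}\|\widetilde{\Lambda}(t)\boldsymbol m\|^2=2\langle\widetilde{\Lambda}(t)\boldsymbol m|\widetilde{\mathcal{L}}(t)\widetilde{\Lambda}(t)\boldsymbol m\rangle\le0$ is correct but proves monotonicity of $\|\widetilde{\Lambda}(t)\boldsymbol n\|$, not of $|\langle\boldsymbol n|\widetilde{\Lambda}(t)|\boldsymbol n\rangle|=|2T_{00}(t)-1|$, and the former does not imply the latter. The remaining ``Cauchy--Schwarz / sign-chasing argument\ldots using only that both are real symmetric and $\widetilde{\mathcal{L}}(t)\le0$'' is never carried out, and in fact cannot work in that generality: for non-commuting symmetric matrices with $\widetilde{\Lambda}>0$ and $\widetilde{\mathcal{L}}\le0$ the quadratic form $\langle\boldsymbol n|\widetilde{\mathcal{L}}\,\widetilde{\Lambda}|\boldsymbol n\rangle$ can be strictly positive (take, e.g., $\widetilde{\Lambda}=\mathrm{diag}(1,\epsilon)$ and a rank-one $-\widetilde{\mathcal{L}}$ with a large off-diagonal component relative to one of its diagonal entries). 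So the statement you are trying to prove without commutativity is simply not a theorem about arbitrary symmetric pairs; the commutation relation forced by the hypotheses is indispensable, and once you have it the proof is three lines.
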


\begin{proof}
	One has to check when~\eqref{condition_classicalprocess} holds. Since
	$\Lambda_t^{\phantom{\ddagger}}=\Lambda_t^\ddagger$, this  implies that their Bloch representations satisfy
	\begin{equation}\label{symmetricBloch}
		\wt{\Lambda}(t)=\wt{\Lambda}^T(t)\,.
		\end{equation}
Also, the assumed pure dissipativeness of $\Lambda_t$ entails that the generator itself is self-dual, $\mc{L}_t=\mc{L}_t^\ddagger$, so that $\wt{\mathcal{L}}^T(t)=\wt{\mathcal{L}}(t)$ (see Lemma~\ref{lemma1}).
Then, taking the time derivative of both members of~\eqref{symmetricBloch},
one gets
	\begin{equation}\label{commutLandSOl}
		\dot{\widetilde{\Lambda}}(t)=\widetilde{\mathcal{L}}(t)\widetilde{\Lambda}(t)=\widetilde{\Lambda}(t)\widetilde{\mathcal{L}}(t)\ ,
	\end{equation}
	or, equivalently, $[{\widetilde{\Lambda}(t)},{\dot{\wt{\Lambda}}(t)}]=0$. The invertibility of $\Lambda_t$ means that none of the real eigenvalues of $\wt{\Lambda}(t)=\wt{\Lambda}^T(t)$ can change sign with varying $t$. Since, at $t=0$, $\wt{\Lambda}_t=\mathds{1}$ all of them must remain positive. Therefore, one can consistently express the generator as a logarithmic derivative:
	\begin{equation}
		\wt{\mathcal{L}}(t)= \dot{\widetilde{\Lambda}}(t)\widetilde{\Lambda}(t)^{-1}=\widetilde{\Lambda}(t)^{-1}\dot{\widetilde{\Lambda}}(t)=\dv{t}\log\wt{\Lambda}(t)\ .
	\end{equation}
	Thus, $\wt{\Lambda}(t)=e^{\int_{0}^{t}\dd{s}\wt{\mathcal{L}}(s)}$ without time-ordering; furthermore,
	\begin{equation}\label{f_tproof}
		f_t=-\frac{1}{2}\frac{\mel{\boldsymbol{n}}{\sqrt{\widetilde{\Lambda}(t)}(-\widetilde{\mathcal{L}}(t))\sqrt{\widetilde{\Lambda}(t)}}{\boldsymbol{n}}}{\mel{\boldsymbol{n}}{\widetilde{\Lambda}(t)}{\boldsymbol{n}}} \le 0\,.
	\end{equation}
 The proof is completed by observing that, because of~\eqref{P-divunital}, the dynamics $\Lambda_t$ is $P$-divisible iff $-\wt{\mathcal{L}}(t)\geq 0$.
\end{proof}

\begin{remark}
\label{remNOTO}
The assumptions in Proposition~\ref{prop_selfduality} lead to a proper exponential solution $\wt{\Lambda}(t)=e^{\int_{0}^{t}\dd{s}\wt{\mathcal{L}}(s)}$ without asking for commuting generators at different times as in Example~\ref{ex:Pauli_dyn}. Indeed, a sufficient condition for the exponential to solve
 $\dot{\wt{\Lambda}}(t)=\wt{\mathcal{L}}(t)\wt{\Lambda}(t)$ is that $\Big[\wt{\mathcal{L}}(t)\,,\,\int_0^t{\rm d}s\,\wt{\mathcal{L}}(s)\Big]=0$ for all $t\geq 0$ (notice that, from the power series of the exponential,  \eqref{commutLandSOl} also follows).
 As an example, consider a generator of a unital dynamics of the form
 $$
 \wt{\mathcal{L}}(t)=\left\{
 \begin{matrix}
 \begin{pmatrix}
 -1&\cos t&0\cr
 \cos t&-1&0\cr
 0&0&-2\cr
     \end{pmatrix}& \quad 0\leq t\leq \pi\cr
-\begin{pmatrix}
    1&0&0\cr
    0&2&0\cr
    0&0&1
\end{pmatrix}& \quad \pi\leq t
     \end{matrix}
 \right.\ .
 $$
 $\wt{\mathcal{L}}(t)$ and $\wt{\mathcal{L}}(s)$ do not commute when $0\leq s\leq \pi$ and $t>\pi$, while they do when either $0\leq s,t\leq \pi$ or $s,t> \pi$; on the other hand, since
 $$
 \int_0^t{\rm d}s\,\wt{\mathcal{L}}(s)=\begin{pmatrix}
     -t&\sin t&0\cr\sin t&-t&0\cr0&0&-2t
     \end{pmatrix}
$$
when $0\leq t\leq \pi$ and, when $t\ge\pi$,
$$
\int_0^t{\rm d}s\,\wt{\mathcal{L}}(s)=
-\pi\begin{pmatrix}
     1&0&0\cr0&1&0\cr0&0&2
     \end{pmatrix}-(t-\pi)\begin{pmatrix}
     1&0&0\cr0&2&0\cr0&0&1
     \end{pmatrix}
     $$
    the condition $\Big[\wt{\mathcal{L}}(t),\int_0^t{\rm d}s\,\wt{\mathcal{L}}(s)\Big]=0$, for all $t\geq 0$, that avoids time-ordering is fulfilled. Moreover, $\wt{\mc{L}}(t)$ is symmetric (self-dual generator), so that the dynamics is purely dissipative. The exponential solution then entails the self-duality of the map. $-\wt{\mc{L}}(t)\ge0$ finally assures  $P$-divisibility of the generated dynamics, so that all the hypothesis of Proposition~\ref{prop_selfduality} are matched.
\end{remark}

\section{A class of orthogonally covariant dynamics}\label{sec:class}

In the following, we study whether the $P$-divisibility of a quantum qubit dynamics $\Lambda_t$ is inherited by its classical reductions $T(t)$ for non self-dual, purely dissipative dynamics. 
We focus upon the following family of maps,
	\begin{align}\label{class_notime}
		\Phi^{({A,\lambda,\mu})}[\rho]=&\sum_{i,j=0}^1 A_{ij} E_{ij} \rho E_{ji} + \lambda E_{00} \rho E_{11}+\overline{\lambda}\, E_{11} \rho E_{00} \nonumber\\  &+\mu \, E_{11}\rho^T  E_{00}+\overline{\mu}\, E_{00} \rho^T E_{11}  \,,
	\end{align}
where $E_{ij}=\ketbra{i}{j}$ are the matrix units associated with the basis of eigenvectors of $\sigma_3$, $\sigma_3\ket{0}=\ket{0}, \sigma_3\ket{1}=-\ket{1}$. The maps are not in general of Pauli type and
depend parametrically on  a triplet $(A,\lambda, \mu)$, given by a $2\times2$ matrix ${A=[A_{ij}]}$ and coefficients $\lambda, \mu \in \mathbb{C}$.

They were studied in the $d$-dimensional case in~\cite{SinghNecita2021}. In particular, they satisfy the following  group composition law,
 \begin{equation}\label{groupclass}
 	\Phi^{({A,\lambda,\mu})}\Phi^{({A',\lambda',\mu'})}=\Phi^{({AA',\,\lambda\lambda'+\mu\overline{\mu'},\,\lambda\mu+\mu\overline{\lambda'}})}\ .
 \end{equation}
Moreover, if $\mu=0$, the maps $\Phi^{(A,\lambda,0)}$ satisfy the diagonal unitary-covariance property
$$
\Phi^{(A,\lambda,0)}[U X U^\dagger]=U \Phi^{(A,\lambda,0)}[X] U^\dagger\ ,\quad
U=\sum_i e^{i \theta_i} E_{ii}\ .
$$
Conversely, if $\lambda=0$, the maps $\Phi^{(A,0,\mu)}$ are conjugate diagonal unitary-covariant, namely
$$
\Phi^{(A,0,\mu)}[U X U^\dagger]=\overline{U} \Phi^{(A,0,\mu)}[X] U^T\ .
$$
Lastly, if  both $\lambda,\mu$ are different from zero, the only symmetry left is with respect to rotations around the $z$ axis, corresponding to the diagonal orthogonal covariance
$$
\Phi^{(A,\lambda,\mu)}[O X O^T]=O \Phi^{(A,\lambda,\mu)}[X]O^T\ ,
$$
where $O=\sum_{i} o_i E_{ii}$, $o_i\in\{-1,1\}$.

Consider a time-dependent family of hermiticity and trace-preserving  maps within the above class,
$\Lambda_t=\Phi^{(A(t),\lambda_t,\mu_t)}$. Then, $A(t)$ is to be taken real and of the form
$A_{ij}(t)\in \mathbb{R}$ and $\sum_{i} A_{ij}(t)=1$, so that
\begin{equation}
A(t)=\begin{pmatrix}
	a_t & 1-b_t \\
	1-a_t & b_t
\end{pmatrix}\ ,\quad a_t, b_t\in\mathbb{R}\ .
\end{equation}
Asking that $\Lambda_{t=0}=\mathrm{id}_2$ yields $\lambda_0=1, \mu_0=0$ and $a_0=1$; whereas positivity can be proved to be equivalent to~\cite{SinghNecita2021}
\begin{align}
	&A_{ij}(t)\ge 0 \,,\nonumber\\ &\abs{\lambda_t}+\abs{\mu_t} \le \sqrt{a_t b_t}+\sqrt{(1-a_t)(1-b_t)}\,.
\end{align}
so that $A(t)$ has to be a stochastic matrix. On the other hand, the conditions for complete positivity can be obtained by imposing positivity of the associated Choi matrix,
\begin{equation}\label{Choi}
	\Lambda_t\otimes \mathrm{id}_2\bigg[\sum_{ij}E_{ij}\otimes E_{ij}\bigg] \ge 0\,,
\end{equation}
leading to the stronger necessary and sufficient conditions
\begin{equation}\label{CP_class}
	\abs{\lambda_t}\le \sqrt{a_t b_t} \,, \quad \abs{\mu_t} \le\sqrt{(1-a_t)(1-b_t)}\,.
\end{equation}
Recently, Schwartz-positivity for maps within this class has also been fully characterized~\cite{Bihalan}. Sufficient conditions for complete positivity in terms of the generator for this class of maps were also investigated in~\cite{HallBlocks}.
We shall now characterize the divisibility properties of the dynamics within class \eqref{class_notime}. Due to the composition property~\eqref{groupclass}, the generator ${\mathcal{L}_t=\dot{\Lambda}^{\phantom{.}}_t\Lambda_t^{-1}}$ will itself belong to the class~\eqref{class_notime}, ${\mathcal{L}_t=\Phi^{(B(t),\ell_t,m_t)}}$, with $B(t)$ of the form
\begin{equation}
	 B(t)=\dot A(t)A(t)^{-1}=\begin{pmatrix}
	 	-\gamma_-(t) & \gamma_+(t) \\
	 	\gamma_-(t) & -\gamma_+(t)
	 \end{pmatrix}
\end{equation}
where $\gamma_-(t)$, $\gamma_+(t)$ are related to $a_t$, $b_t$ via
\begin{equation*}
	\gamma_-(t)=-\frac{\dot a_t (1-b_t)+ \dot b_t a_t}{a_t+b_t-1} , \ \	\gamma_+(t)=-\frac{\dot a_t b_t+ \dot b_t (1-a_t)}{a_t+b_t-1},
\end{equation*}
and $l_t$, $m_t$ are related to $\lambda_t$, $\mu_t$ via
\begin{align}\label{ell_emm}
	l_t&=\frac{\dot\lambda_t\overline{\lambda_t}-\dot\mu_t\overline{\mu_t}}{\abs{\lambda_t}^2-\abs{\mu_t}^2} \,,  \qquad	m_t=\frac{\dot\mu_t\lambda_t-\dot\lambda_t{\mu}_t}{\abs{\lambda_t}^2-\abs{\mu_t}^2}  \,.
\end{align}
It is convenient to introduce time-dependent ``transversal'' and ``longitudinal'' rates,
\begin{equation}
	\Gamma_T(t):=-\Re(l_t)\,, \qquad \Gamma_L(t):={\gamma_+(t)+\gamma_-(t)}\,.
\end{equation}
and let also  $\omega(t):=-\Im(l_t)$.
Notice that $\Gamma_T(t)$ only depends on the absolute values of $\lambda_t$ and $\mu_t$,
\begin{equation}\label{Rel_explicit}
	\Gamma_T(t)=-\frac{1}{2}\frac{\partial_t \abs{\lambda_t}^2-\partial_t\abs{\mu_t}^2}{|\lambda_t|^2-|\mu_t|^2} \,.
\end{equation}
Necessary and sufficient conditions for $P$ and $CP$-divisibility of the dynamics can be then determined from the generator.

\begin{proposition}
\label{propP-div}
Let  ${\mathcal{L}_t=\Phi^{(B(t),l_t, m_t)}}$ be the generator of an evolution $\Lambda_t=\Phi^{(A(t),\lambda_t, \mu_t)}$. Then,
\begin{itemize}
	\item $\Lambda_t$ is $P$-divisible iff
	\begin{align}\label{nonunitalPD}
		&\gamma_{\pm}(t) \ge 0 \,,\\
		&\Gamma_T(t)-\frac{\Gamma_L(t)}{2}+\sqrt{\gamma_+(t)\gamma_-(t)} \ge \abs{m_t}\,;\label{nontrivial_PD}
	\end{align}
	\item $\Lambda_t$ is $CP$-divisible iff
	\begin{align}
		& \gamma_{\pm}(t) \ge 0 \,,
		\,\qquad {\gamma_+(t)\gamma_-(t)}\ge \abs{m_t}^2\,, \label{CPd1}\\
		& \Gamma_T(t) \ge \frac{\Gamma_L(t)}{2}\ .
  \label{CPd2}
	\end{align}
\end{itemize}
\end{proposition}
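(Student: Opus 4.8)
The plan is to obtain both lists of conditions by evaluating the generator $\mathcal{L}_t=\Phi^{(B(t),l_t,m_t)}$ on rank-one qubit projectors, using that every map in the class~\eqref{class_notime} leaves invariant the ``population'' subspace $\mathrm{span}\{E_{00},E_{11}\}$, on which it acts through the matrix $B(t)=\left(\begin{smallmatrix}-\gamma_-(t)&\gamma_+(t)\\ \gamma_-(t)&-\gamma_+(t)\end{smallmatrix}\right)$, and the ``coherence'' subspace $\mathrm{span}\{E_{01},E_{10}\}$, on which it acts through the Hermitian block $\left(\begin{smallmatrix}l_t&\overline{m_t}\\ m_t&\overline{l_t}\end{smallmatrix}\right)$ (the transpose term in~\eqref{class_notime} is precisely what makes this block non-diagonal). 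By the composition law~\eqref{groupclass}, $\mathcal{L}_t$ and all intertwiners $\Lambda_{t,s}$ remain inside the class, and the stated $\gamma_\pm(t)$, $l_t$, $m_t$ are well defined under the standing invertibility hypothesis.

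For the $P$-divisibility part I would invoke Theorem~\ref{timekossakowskiconditions}: $\Lambda_t$ is $P$-divisible iff $\Tr(Q\mathcal{L}_t[P])\ge0$ for all $t\ge0$ and all orthogonal rank-one projectors, which for a qubit means $P=P_{\bs n}=(\mathds{1}+\bs n\cdot\bs\sigma)/2$ and $Q=P_{-\bs n}$, $\bs n\in\mathbb{R}^3$ with $\|\bs n\|=1$. Expanding $P_{\pm\bs n}$ in the $\{E_{ij}\}$ basis (population weights $(1\pm n_3)/2$, coherence weight $z=n_1-in_2$) and using the block structure above, a direct computation yields, after the change of variables $x=1+n_3$, $y=1-n_3$,
\[
\Tr(P_{-\bs n}\,\mathcal{L}_t[P_{\bs n}])=\frac{1}{4}\Big[\gamma_-(t)\,x^2+\gamma_+(t)\,y^2+\big(2\Gamma_T(t)-\Gamma_L(t)-2\,\Re(\zeta\, m_t)\big)\,xy\Big],
\]
where $\zeta$ is a unit-modulus phase that is free once $n_3$ is fixed (it encodes the azimuthal angle of $\bs n$). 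Requiring this to be nonnegative for all $\bs n$ amounts to first minimising over $\zeta$, which replaces $\Re(\zeta m_t)$ by $|m_t|$, and then — since $x,y\ge0$ and by homogeneity the constraint $x+y=2$ is immaterial — demanding that the binary quadratic form $\gamma_-(t)\,x^2+\gamma_+(t)\,y^2+\big(2\Gamma_T(t)-\Gamma_L(t)-2|m_t|\big)xy$ be nonnegative on the whole first quadrant. A form $\alpha x^2+\beta y^2+2\kappa xy$ is nonnegative for $x,y\ge0$ iff $\alpha,\beta\ge0$ and $\kappa\ge-\sqrt{\alpha\beta}$; here this reads $\gamma_\pm(t)\ge0$, i.e.~\eqref{nonunitalPD}, together with $\Gamma_T(t)-\tfrac12\Gamma_L(t)+\sqrt{\gamma_+(t)\gamma_-(t)}\ge|m_t|$, i.e.~\eqref{nontrivial_PD}.

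For the $CP$-divisibility part I would use that $CP$-divisibility is equivalent to the positive semidefiniteness of the Kossakowski matrix $K(t)$ of $\mathcal{L}_t$, i.e.\ to complete positivity of the infinitesimal intertwiner $\Lambda_{t+\epsilon,t}=\mathds{1}+\epsilon\,\mathcal{L}_t+O(\epsilon^2)$ as $\epsilon\to0^+$. By~\eqref{groupclass} this map again lies in the class~\eqref{class_notime}, and inverting the composition law gives $A(t+\epsilon,t)=\mathds{1}+\epsilon\,B(t)+O(\epsilon^2)$, $\lambda_{t+\epsilon,t}=1+\epsilon\, l_t+O(\epsilon^2)$ and $\mu_{t+\epsilon,t}=\epsilon\, m_t+O(\epsilon^2)$. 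Imposing the complete-positivity conditions~\eqref{CP_class} on $\Lambda_{t+\epsilon,t}$ and reading off the leading order in $\epsilon$: stochasticity of $A(t+\epsilon,t)$ forces $\gamma_\pm(t)\ge0$; the condition $|\mu_{t+\epsilon,t}|^2\le(1-a_{t+\epsilon,t})(1-b_{t+\epsilon,t})$ becomes the $O(\epsilon^2)$ statement $|m_t|^2\le\gamma_+(t)\gamma_-(t)$, i.e.~\eqref{CPd1}; and the condition $|\lambda_{t+\epsilon,t}|^2\le a_{t+\epsilon,t}b_{t+\epsilon,t}$ becomes, at $O(\epsilon)$, $2\Re(l_t)\le-\Gamma_L(t)$, that is $\Gamma_T(t)\ge\tfrac12\Gamma_L(t)$, i.e.~\eqref{CPd2}. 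Since these leading-order requirements at each time coincide with $K(t)\ge0$, and the latter is equivalent to $CP$-divisibility, the claim follows.

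The step I expect to be the main obstacle is the $P$-divisibility computation: one must carry out the trace $\Tr(P_{-\bs n}\mathcal{L}_t[P_{\bs n}])$ carefully, keeping track of the transpose term in~\eqref{class_notime} (which couples the two coherence slots and generates the $m_t$-dependence), perform the azimuthal minimisation correctly, and — the key move — recognise the substitution $x=1+n_3$, $y=1-n_3$ that recasts the condition as nonnegativity of a homogeneous binary quadratic form on the first quadrant, for which the elementary criterion above reproduces exactly~\eqref{nonunitalPD}–\eqref{nontrivial_PD}. One must also keep the invertibility hypothesis in view (so that $B(t)$, $l_t$, $m_t$ are defined), and note the asymmetry in~\eqref{CP_class} whereby the $\mu$-condition is an $O(\epsilon^2)$ effect while the $\lambda$-condition is $O(\epsilon)$ when passing to the infinitesimal intertwiner.
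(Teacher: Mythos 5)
Your proposal is correct. The $P$-divisibility half is essentially the paper's own argument in Appendix~\ref{App1}: there, too, one evaluates $\Tr(Q\mathcal{L}_t[P])$ on a generic pair of orthogonal rank-one projectors, obtains exactly your quadratic form (the paper writes it in the variables $\abs{w_1}^2,\abs{w_2}^2$, which are your $x/2,y/2$, with the azimuthal phase appearing as $\cos(\chi_t-\Omega)$), and then proves sufficiency by completing the square and necessity by choosing $\abs{w_1}^2=\sqrt{\gamma_+}/(\sqrt{\gamma_+}+\sqrt{\gamma_-})$ and $\Omega=\chi_t$ --- which is precisely the extremal point of your criterion for nonnegativity of a binary quadratic form on the first quadrant. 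Where you genuinely diverge is the $CP$-divisibility half: the paper computes the conditional Choi matrix $Y_t=(\mathds{1}_4-P_2^+)\,\mathcal{L}_t\otimes\mathrm{id}_2[P_2^+]\,(\mathds{1}_4-P_2^+)$ explicitly and reads off positivity, whereas you expand the infinitesimal intertwiner $\Lambda_{t+\epsilon,t}$ inside the class via the group law and impose the already-established complete-positivity conditions~\eqref{CP_class} order by order in $\epsilon$. Your route is arguably more economical since it recycles~\eqref{CP_class}, and your order bookkeeping (the $\mu$-condition entering at $O(\epsilon^2)$, the $\lambda$-condition at $O(\epsilon)$) is correct; the only point to make explicit is the sufficiency direction, i.e.\ that the leading-order conditions at every $t$ are equivalent to $K(t)\ge 0$ (which can be checked directly against the Kossakowski matrix displayed in Remark~\ref{techremark}) and hence, by the standard time-splitting argument, to complete positivity of all finite intertwiners --- a fact the paper invokes as known and that your closing sentence also relies on.
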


Though in a different form, this result has already been obtained in~\cite{CabreraDavalosGorin2019}; in Appendix~\ref{App1} we report the proof since it is somewhat simpler that in the quoted reference. Notice that condition~\eqref{CPd2} is the time-dependent version of the celebrated constraint between relaxation rates already discussed in seminal paper~\cite{GoriniKossSud}, which has been proved to be universal for qubit completely positive semigroups~\cite{KimuraQubit}, and recently conjectured to be universal for all completely positive dynamical semigroups~\cite{chruscinski2021universal}.

	\begin{remark}\label{techremark}
			The Bloch representation of $\Lambda_t$ as in \eqref{Pauli_repDyn} leads to the following dynamics of the
   Bloch vector:
			\begin{equation}
				\bs{n}_t = \widetilde{\Lambda}(t) \bs{n} + \bs{u}_t\,, \qquad \bs{u}_t=(0,0,a_t-b_t)\,,
			\end{equation}
			with
		\begin{equation*}\label{class_matrixrepr}
			\widetilde{\Lambda}(t)=\begin{pmatrix}
				\Re(\lambda_t+\mu_t)   &     \Im(\lambda_t-\mu_t)  &	0	\\
				-\Im(\lambda_t+\mu_t)  &	 \Re(\lambda_t-\mu_t)  &	0	\\
				0				   &		0 			   & 	a_t+b_t-1	\\
			\end{pmatrix}.
		\end{equation*}
		The block diagonal structure of $\wt{\Lambda}(t)$ makes evident the orthogonal covariance of the map with respect to orthogonal transformations diagonal in the $\sigma_3$ basis. The generator can also be rewritten in GKSL Pauli form as in~\eqref{timedep_qubit} and~\eqref{timedependent_qubitgenerator}: with Hamiltonian
		\begin{equation}
			H(t)=\frac{\omega(t)}{2} \sigma_3 \ .
		\end{equation}
		Furthermore, letting $\kappa(t)=-\Re(m_t)$, $\eta(t)=-\Im(m_t)$ and $\delta(t)=(\gamma_+(t)-\gamma_-(t))/2$, with Kossakowski matrix
		\begin{equation*}
			K(t)=\begin{pmatrix}
				\frac{\Gamma_L(t)}{2}-\kappa(t) & \eta(t)+i\delta(t)& 0 \\
				\eta(t)-i\delta(t)  &  \frac{\Gamma_L(t)}{2}+\kappa(t) & 0\\
				0 & 0 & \Gamma_T(t)-\frac{\Gamma_L(t)}{2}
			\end{pmatrix}\ .
		\end{equation*}
		Notice that $CP$-divisibility conditions can be then read off asking for $K(t)\ge0$.
		If $\Lambda_t$ is also unital,  $A(t)$ has to be bistochastic, thus $a_t=b_t$ and
    $$
    \gamma_+(t)=\gamma_-(t)=\frac{\Gamma_L(t)}{2}=\frac{\dot a_t}{2a_t-1}\,,
    $$
  so that the $P$-divisibility conditions simplify to
	\begin{equation}\label{unital_PD}
		\Gamma_L(t)\ge 0 \, , \qquad \Gamma_T \ge \abs{m_t}\,.
	\end{equation}
%\color{blue}
%
%=====================
%
%Maybe just $\Gamma_L(t) \ge 0$ \ ?
%
%====================
%
%\color{black}
	Indeed, the generator $\widetilde{\mathcal{L}}(t)$ will also be Bloch diagonal, and the conditions~\eqref{unital_PD} follow from $-(\wt{\mc{L}}(t)+\wt{\mc{L}}^T(t))\ge0$.
	On the other hand, the conditions for $CP$-divisibility in the unital case simplify to
	\begin{equation}\label{unital_CPD}
		\Gamma_T(t)\ge\frac{\Gamma_L(t)}{2}\ge \abs{m_t}\,.
	\end{equation}

	\end{remark}

\subsection{Non-self dual dynamics from self-dual generator.}
\label{sec:nonselfdualMAP_selfdualGEN}

As already stressed, generators of dynamical maps in the class~\eqref{class_notime} are also in the same class; moreover, we shall focus upon purely dissipative generator; for qubit unital dynamics, this amounts to requiring that $\mathcal{L}_t^{\phantom\ddagger}=\mathcal{L}_t^\ddagger$ (see Lemma \ref{lemma1}). From the general form \eqref{class_notime}, the dual of $\Phi^{(A,\lambda, \mu)}$ acts on a matrix $X\in \Md{2}$ as
\begin{align*}
	(\Phi^{(A,\lambda,\mu)})^\ddagger[X]=&\sum_{ij} A_{ji} E_{ij} X E_{ji} + \overline{\lambda} E_{00}X E_{11} \nonumber\\
	&\hskip-1cm+ \lambda E_{11}X E_{00}+ \mu E_{00}X^T E_{11}+ \overline{\mu} E_{11}X^T E_{00} \ .
\end{align*}
Then, self-duality corresponds to $A=A^T$ and $\lambda \in \mathbb{R}$.
Therefore, a generator $\mathcal{L}_t=\Phi^{(B(t),l_t, m_t)}$  is self-dual iff $\gamma_+(t)=\gamma_-(t)$, implying unitality of the dynamics, and $l_t$ is real:
\begin{equation}\label{selfdual_gen}
	l_t=-\Gamma_T(t)\,, \qquad \omega(t)=0\,.
\end{equation}
 Let $\lambda_t=\abs{\lambda_t}e^{i \varphi_t}$ and $\mu_t=\abs{\mu_t}e^{i \theta_t }$, $\varphi_t,\theta_t\in \mathbb{R}$;
then, \eqref{ell_emm} and \eqref{selfdual_gen} imply the following relation between the phases and the moduli of $\lambda_t$ and $\mu_t$ of ${\Lambda}_t=\Phi^{(A(t),\lambda_t, \mu_t)}$,
\begin{equation}\label{dissipativity}
\dot\varphi_t\abs{\lambda_t}^2=\dot \theta_t\abs{\mu_t}^2\,.
\end{equation}
Notice that this fixes $\dot\varphi_{t=0}=0$.
We now provide a construction for a $P$-divisible, non self-dual dynamics arising from a self-dual, purely dissipative generator. Let us define
\begin{equation}\label{gh}
	g_t:=\abs{\lambda_t}+\abs{\mu_t} \,, \qquad	h_t:=\abs{\lambda_t}-\abs{\mu_t}\,,
\end{equation}
with $0< h_t \le g_t$, $g_0=h_0=1$. Then,  \eqref{Rel_explicit} can be recast as
\begin{equation}\label{lt_construction}
 \Gamma_T(t)= -\frac{1}{2}\frac{\partial_t (g_t h_t)}{g_t h_t}=-\frac{\dot g_t}{2g_t}-\frac{\dot h_t}{2h_t}\,,
\end{equation}
which must be positive for $P$-divisibility.
On the other hand, substituting \eqref{gh} and the self-duality condition \eqref{dissipativity} into the second of  \eqref{ell_emm}, one recasts $\abs{m_t}$ as
\begin{align*}
	\abs{m_t}=\abs{\frac{\dot g_t}{2g_t}-\frac{\dot h_t}{2h_t}+ i \,\dot \theta_t\, \frac{g_t-h_t}{g_t+h_t}} \,.
\end{align*}
Taking the square of the inequality $\Gamma_T(t)\ge\abs{m_t}$,  the following necessary and sufficient conditions for $P$-divisibility can be finally found,
\begin{equation}\label{nec_suffcondition}
	|\dot\theta_t|^2 \left(\frac{g_t-h_t}{g_t+h_t}\right)^2\le \frac{\dot g_t }{g_t} \frac{\dot h_t}{h_t}\,.
\end{equation}
From \eqref{lt_construction} and \eqref{nec_suffcondition} one deduces, in particular, that $g_t$, $h_t$ must be monotonically decreasing.

\begin{example}
\label{newexample}
Let $C\ge0$ and define $\Lambda_t=\Phi^{(A(t),\lambda_t,\mu_t)}$ by
\begin{align}
	\abs{\lambda_t}&= e^{-2t} \cosh(t) \,, \qquad \varphi_t=C\tanh^3(t) \,,\\
	\abs{\mu_t}&= e^{-2t}\sinh(t) \,, \qquad  \ \theta_t=3 C \tanh(t)\,, \\
	a_t&=b_t=e^{-t}\,\,\cosh(t)  \,,
\end{align}
corresponding to the positive and monotonically decreasing functions $h_t=e^{-3t}$, $g_t=e^{-t}$. As one easily checks, \eqref{CP_class} are satisfied and the map is thus completely positive. On the other hand, the self-dual generator is given by
\begin{align}
	l_t=-\Gamma_T(t)&=-2\,, \qquad m_t=\sqrt{1+\, r_t^2}\,e^{i (\theta_t+\varphi_t)}\,,\\
	\gamma_+(t)&=\gamma_-(t)=\frac{\Gamma_L(t)}{2}=1 \,,
\end{align}
where $r_t=3C (1-\tanh^2(t))\tanh(t)$. Notice that
$$
\abs{m_t}=\sqrt{1+r_t^2}\ge1=\frac{\Gamma_L(t)}{2}\,,
$$ so the dynamics is $CP$-divisible iff $C=0$. On the other hand, the $P$-divisibility condition  \eqref{nec_suffcondition} reduces to
$r_t^2 \le 3$.
Since $r_t$ reaches a  maximum  of $2C/\sqrt{3}$, the dynamics is $P$-divisible iff $C\le 3/2$.
\end{example}

\subsection{Classical map from pure dissipation: loss of $P$-divisibility.}

For the class $\Lambda_t=\Phi^{(A(t),\lambda_t,\mu_t)}$, clearly, there exists a preferred basis $E_{00}, E_{11}$, yielding
\begin{equation}
    T(t)=A(t) \,,
\end{equation}
whose $P$-divisibility is necessary to ensure that of $\Lambda_t$ (see \eqref{nonunitalPD}). We shall now see that choosing a different basis generally breaks $P$-divisibility of $T(t)$ even for purely dissipative generators. Interestingly, for a suitable basis, this can occur with  $T(t)$  being invertible for all $t\ge0$.
A classical reduction of $\Lambda_t=\Phi^{(A(t),\lambda_t,\mu_t)}$ builds upon fixing a rank-1 projector $P_{\bs{n}}=\sum_{i,j=0,1}P_{\bs{n}}^{ij}E_{ij}$ and considering
\begin{align}\label{classicalmap_class}
		T_{00}(t)=\Tr(P_{\bs{n}}\Lambda_t[P_{\bs{n}}])=\sum_{ij} A_{ij}(t) & P_{\bs{n}}^{jj} P_{\bs{n}}^{ii} \nonumber\\\hskip-0.7cm +2\Re(\lambda_t) |P_{\bs{n}}^{01}|^2 + 2\Re(\mu_t (P_{\bs{n}}^{10})^2)\,.
\end{align}
which is sufficient to construct $T(t)$ in the unital case.
Letting $\bs{n}=(\sin(\chi)\cos(\xi),\sin(\chi)\sin(\xi),\cos(\chi))$, \eqref{classicalmap_class} can be then recast as
\begin{align}
	T_{00}(t)=&\frac{1}{2}\big(1+(2 a_t -1) \cos^2(\chi)+\abs{\lambda_t}\cos(\varphi_t)\sin^2(\chi)\nonumber\\
	&+\abs{\mu_t} \sin(\chi)\cos(\theta_t+2\xi)\big)\,.
\end{align}
As we already noted, for $\chi=0$, $P=E_{00}$, $T_{00}(t)=a_t$ yielding a $P$-divisible process. Considering instead ${\chi={\pi}/{2}}$, one has
\begin{align}
	T_{00}(t)=\frac{1}{2}\big(1+\abs{\lambda_t}\cos(\varphi_t)
	+\abs{\mu_t} \cos(\theta_t+2\xi)\big)\,.
\end{align}
Recalling that $P$-divisibility of the classical reduction amounts to check condition~\eqref{condition_classicalprocess0},
let $\xi=\pi/4$ and consider the map of Example \ref{newexample},
\begin{align}
	2T_{00}(t)-1=&\,e^{-2t}\cosh(t)\cos(C\tanh^3(t))\nonumber\\
	& -e^{-2t}\sinh(t) \sin(3C\tanh(t))\,.
\end{align}
Choosing $C=3/2$, \eqref{nec_suffcondition} saturates for some $t$. Noting that $0\le\varphi\le \pi/2$ and $0\le\theta\le3\pi/2$, so that $\sin(\theta_t)<0$ $\,\forall \,t>\rm{artanh}(2\pi/9)$, one can verify that $2 T_{00}(t)-1>0,\, \forall t\ge0$.
Thus, $T(t)$ is invertible. Nevertheless, as displayed in Fig.~\ref{fig:t0024}, it is not monotonically decreasing, since $\dot T_{00}(t)$ can become positive. Hence, $f_t$ also becomes positive implying that $T(t)$ is not $P$-divisible.
\begin{figure}
	\centering
	\includegraphics[width=0.9\linewidth]{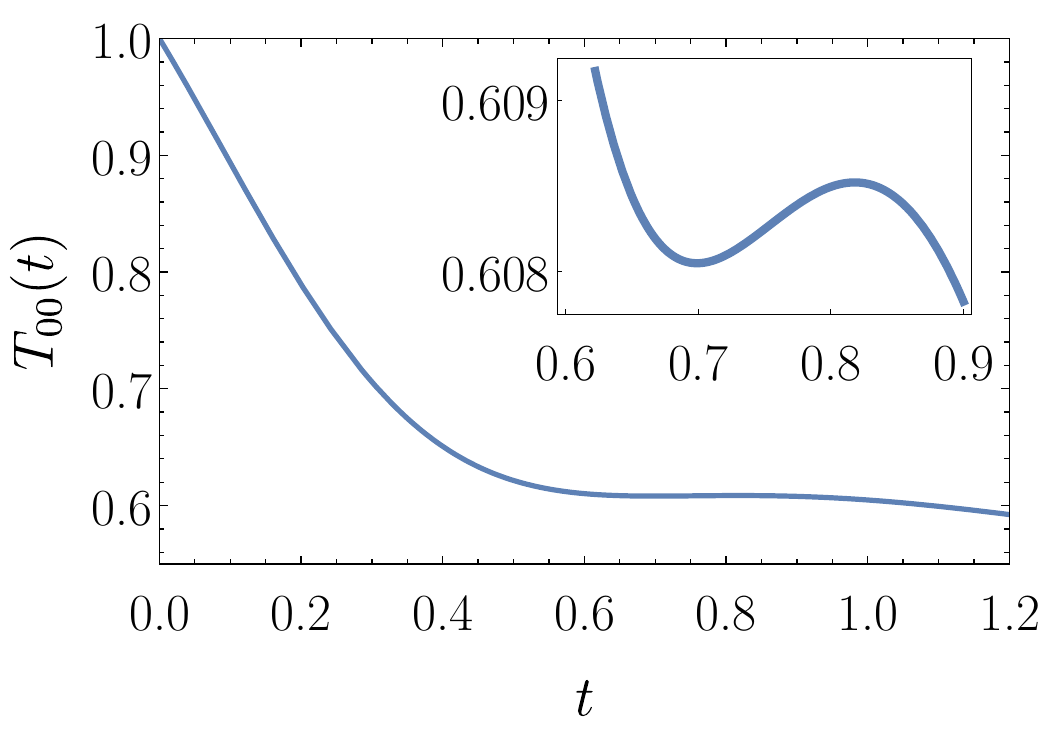}
	\caption{Plot of $T_{00}(t)$ from Example \ref{newexample}, with $C=3/2$ and $\xi=\pi/4$ corresponding to the Bloch vector $(\sqrt{2}/2,\sqrt{2}/2,0)$ which defines the reference classical basis.}
	\label{fig:t0024}
\end{figure}

\begin{remark}
	\label{rmk:nonPd}
Our previous considerations focussed on checking when $P$-divisibility of a purely dissipative qubit dynamical map is inherited by its classical reduction and found that in some cases it can be lost, namely the classical reduction can become non $P$-divisible. We now like to comment that the contrary can also occur; namely, the classical reduction of a non $P$-divisible purely dissipative qubit dynamics can become $P$-divisible. In Fig.~\ref{fig:bfquantum}, an instance of such behaviour is shown by the maps of Example~\ref{newexample} for a suitable $C>3/2$ for which $\Lambda_t$ is not $P$-divisible.
  Interestingly, the corresponding backflow of information can be witnessed by suitable orthogonal projections $P_{\pm\bs{n}}$ such that
	$$
	\dv{t}\frac{1}{2}\normt{\Lambda_t[P_{\bs{n}}-P_{-\bs{n}}]}=\dv{t}\|\wt{\Lambda}(t)\bs{n}\|>0\,,
	$$
    for some $t>0$, as shown by the non-monotonic behaviour of $\mathcal{I}^q_t(P_{\boldsymbol{n}},P_{-\boldsymbol{n}};1/2)=\|\wt{\Lambda}(t)\bs{n}\|$ in the inset of Fig.~\ref{fig:bfquantum}.
    Nevertheless, the stochastic process $T(t)$ defined through the subalgebra $\mathcal{P}_{\bs{n}}$ generated by them is $P$-divisible. This means that, unlike the quantum dynamics it originates from, such a $T(t)$ cannot exhibit classical backflow of information. The orthogonal projections $P_{\pm\bs{n}}$ provide classical probabilities and, concerning information backflow, they are concrete instances of the following behaviour.
  Suppose that there exist $\rho_{\bs{p}}, \rho_{\bs{q}}$ in a commutative subalgebra $\mathcal{P}$, that is classical probability distributions, and $t>s>0$ such that
	\begin{equation*}
	\mathcal{I}^q_t(\rho_{\boldsymbol{p}},\rho_{\boldsymbol{q}};\mu)-\mathcal{I}^q_s(\rho_{\boldsymbol{p}},\rho_{\boldsymbol{q}};\mu)>0\,.
	\end{equation*}
    In such case, $\Lambda_t$ is clearly not $P$-divisible since it shows backflow of information. Using~\eqref{balance_cl}, one now has
	\begin{equation}\label{balance_BFI}
		\mathcal{I}_t^{cl}(\boldsymbol{p},\boldsymbol{q};\mu)+
		\mathcal{C}_t(\boldsymbol{p},\boldsymbol{q};\mu) > \mathcal{I}_s^{cl}(\boldsymbol{p},\boldsymbol{q};\mu)+
		\mathcal{C}_s(\boldsymbol{p},\boldsymbol{q};\mu)\,.
	\end{equation}
	In addition, suppose that the classical dynamics $T(t)$, obtained by restricting the quantum dynamics $\Lambda_t$ on the same subalgebra $\mathcal{P}$, is $P$-divisible, yielding thus
	\begin{equation}
		\mc{C}_t(\bs{p},\bs{q};\mu)-\mc{C}_s(\bs{p},\bs{q};\mu)>\mathcal{I}_s^{cl}(\boldsymbol{p},\boldsymbol{q};\mu)-\mathcal{I}_t^{cl}(\boldsymbol{p},\boldsymbol{q};\mu) \ge 0 \,.
	\end{equation}
	Therefore, the backflow of information from the environment to the open system only affects the coherent contribution to $\mathcal{I}_{t}^q$ and cannot be witnessed by the classical reduced map.
\end{remark}
\begin{figure}
	\centering
    \includegraphics[width=0.95\linewidth]{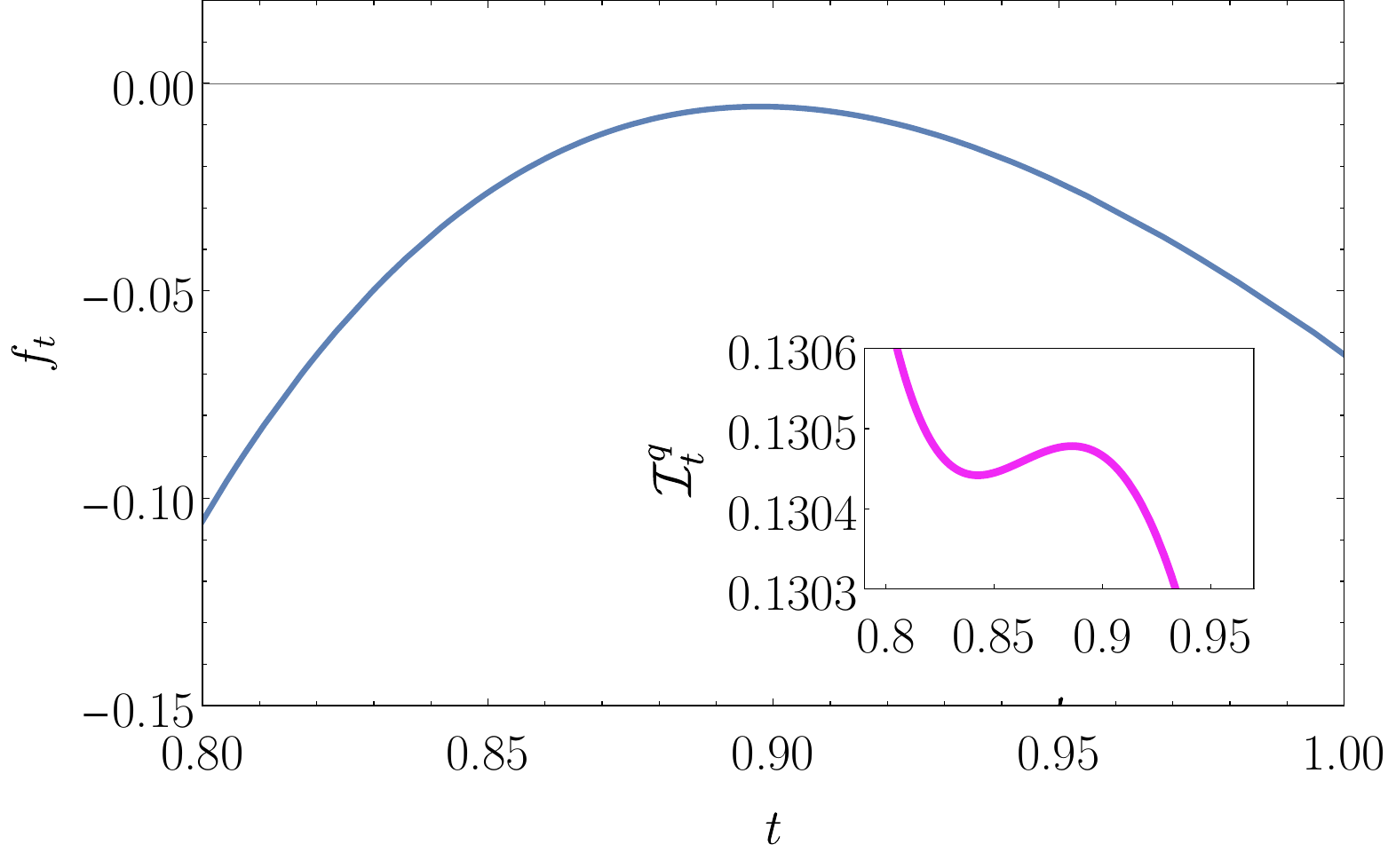}
	\caption{Global maximum of $f_t$ as defined in~\eqref{condition_classicalprocess0}, for the classical reduction of the map of Example~\ref{newexample}, with $C=1.64$ and reference classical algebra defined by the projector $P_{\boldsymbol{n}}$ with $\boldsymbol{n}=(\cos(\xi),\sin(\xi),0)$ and $\xi=\pi/8$. Numerically, it is checked that $f_t$ reaches a maximum of $-0.006$, so that $f_t<0$ for all $t$ and $T(t)$ is $P$-divisible. In the inset, the non-monotonic behaviour of $\mathcal{I}_t^q(P_{\boldsymbol{n}},P_{-\boldsymbol{n}}; 1/2)$
    is displayed, corresponding to a backflow of information that thus involves only the coherences w.r.t. such basis.}
	\label{fig:bfquantum}
\end{figure}

\section{Conclusions}

Generic quantum dynamics restricted to commutative algebras of orthogonal projections generate quantum coherences among them.
With respect to that commutative algebra, by restricting to the diagonal components of its time-evolving orthogonal projections, one obtains a so-called classical reduction of the quantum dynamics, of course depending on the chosen set of orthogonal projections.
In the manuscript, we have studied whether and how the divisibility properties of time-dependent, non-Markovian qubit quantum dynamics are inherited by their classical reductions. We have done it by classically reducing the dynamics and by then studying the generator of the classical reduction.

In Section~\ref{sec:BFI_interpretation}, we have argued that the lack of $P$-divisibility in classical reductions can be ascribed to information created by the quantum dynamics which is stored in the quantum coherences and then released back into the classical component of the dynamics.
Furthermore, for $P$-divisible and unital qubit dynamics we showed that 1) the Hamiltonian contributions to the time-dependent quantum generators generally give rise to qubit quantum dynamics with non $P$-divisible classical reductions; 2) purely dissipative, self-dual dynamics
always have $P$-divisible classical reductions while 3) purely dissipative, non self-dual dynamics may give rise to non $P$-divisible classical reductions.

Though this behaviour is somewhat typical of unitary quantum dynamics, yet it can also emerge from purely dissipative quantum evolutions due to the presence of a non-trivially time-ordered Dyson expansion.

\section*{Acknowledgments}
The authors heartily thank Roberto Floreanini for his comments on the paper. GN would also like to thank Frederik vom Ende for his remarks on the manuscript. FB and GN acknowledge financial support from the PNRR MUR project PE0000023-NQSTI. DC was supported by the Polish National Science Center project No. 2018/30/A/ST2/00837.

\appendix

\section{Proof of Proposition~\ref{propP-div}}
\label{App1}

\begin{proof}
	Let $P=\ketbra{\psi}$ be a generic $2\times 2$ projector, with $\ket{\psi}=(w_1,w_2)=(w_1, \abs{w_2}e^{-i \frac{\Omega}{2}})$,
$w_1,\Omega\in\mathbb{R}$, $\abs{w_1}^2+\abs{w_2}^2=1$  and let $Q$ be its orthogonal complement $Q=\mathds{1}_2-P$.
Letting also $m_t=\abs{m_t}e^{i\chi_t}$, one has
\begin{align}
&\Tr(Q\mathcal{L}_t[P])\nonumber\\
&=\sum_{ij} B_{ij}(t) Q_{ii}P_{jj}+ l_t P_{01} Q_{10} +\overline{l_t} P_{10} Q_{01} \nonumber \\
&\hskip1cm +m_t P_{10} Q_{10}+\overline{m_t} P_{01} Q_{01}\nonumber\\
&=\gamma_-(t)\abs{w_1}^4+\gamma_+(t)\abs{w_2}^4\nonumber+2(\Gamma_T(t)-\frac{\Gamma_L(t)}{2})\abs{w_1}^2\abs{w_2}^2\nonumber\\
&\hskip1cm	-2\abs{m_t}\abs{w_1}^2\abs{w_2}^2\cos(\chi_t-\Omega). \label{last}
\end{align}
For the sufficiency part, let $\gamma_{\pm}(t)\ge 0$ and
$$
\mathcal{G}(t):=\Gamma_T(t)-\frac{\Gamma_L(t)}{2}+\sqrt{\gamma_+(t)\gamma_-(t)}-\abs{m_t} \ge0\,.
$$
Then,
\begin{align*}\label{quadraticformPD}
&\Tr(Q\mathcal{L}_t[P])= \big(\sqrt{\gamma_-(t)}\abs{w_1}^2-\sqrt{\gamma_+(t)}\abs{w_2}^2\big)^2\\
&+2\big(\Gamma_t(t)-\frac{\Gamma_L(t)}{2}+\sqrt{\gamma_+(t)\gamma_-(t)} \big)\abs{w_1}^2\abs{w_2}^2\\&-2\,\abs{m_t}\cos(\chi_t-\Omega)\big)\abs{w_1}^2\abs{w_2}^2\\
&\ge	\big(\sqrt{\gamma_-(t)}\abs{w_1}^2-\sqrt{\gamma_+(t)}\abs{w_2}^2\big)^2+2\mathcal{G}(t)\abs{w_1}^2\abs{w_2}^2 \\
&\ge0 \,,
\end{align*}
so that
$P$-divisibility follows from Theorem~\ref{timekossakowskiconditions}.
To prove the necessity part, first notice that, asking for $\Tr(Q\mc{L}_t[P])\ge0$
 the choices $w_1=1$, and $w_2=1$ imply that $\gamma_{\pm}(t)\ge0$. Choose instead, for fixed $t\ge0$,
\begin{align*}
	\abs{w_1}^2=\frac{\sqrt{\gamma_+(t)}}{\sqrt{\gamma_+(t)}+\sqrt{\gamma_-(t)}} \,,  \ \abs{w_2}^2=\frac{\sqrt{\gamma_-(t)}}{\sqrt{\gamma_+(t)}+\sqrt{\gamma_-(t)}}\,,
\end{align*}
and $\Omega=\chi_t$ so that, from \eqref{last},
 \begin{align*}
 	0\le\Tr(Q\mathcal{L}_t[P])=2\mathcal{G}(t)\abs{w_1}^2\abs{w_2}^2,
 \end{align*}
 which implies \eqref{nontrivial_PD}. For $CP$-divisibility, a necessary and sufficient condition for $\mathcal{L}_t$ to be in the GKSL form for all $t\ge0$ is~\cite{EvansLewis77,Wolf2008Assessing}
\begin{equation}\label{CPdivisibility_condition}
	Y_t\equiv(\mathds{1}_4-P_2^+) \mathcal{L}_t\otimes\mathrm{id}_2[P_2^+] (\mathds{1}_4-P_2^+) \ge 0\,,
\end{equation}
Noting that
\begin{equation*}
	Y_t=\begin{pmatrix}
	\frac{2\Gamma_T(t)-\Gamma_L(t)}{4} & 0 & 0 & -\frac{2\Gamma_T(t)-\Gamma_L(t)}{4}\\
	0 & \gamma_+(t) &  m_t & 0\\
	0 &\overline{m_t} &  \gamma_-(t) & 0\\
	-\frac{2\Gamma_T(t)-\Gamma_L(t)}{4} & 0 & 0 & \frac{2\Gamma_T(t)-\Gamma_L(t)}{4}\\
\end{pmatrix}.
\end{equation*}
which is positive iff  conditions~\eqref{CPd1} and~\eqref{CPd2} are verified.
\end{proof}

\section{Proof of inequality \eqref{bound2_coh}}
\label{app2:bound}

From \eqref{balance_cl},
\begin{align*}
\Delta \mc{I}_{t,s}^{cl}(\boldsymbol{p},\boldsymbol{q};\mu)&=\mathcal{I}_t^{cl}(\boldsymbol{p},\boldsymbol{q};\mu)-\mathcal{I}_s^{cl}(\boldsymbol{p},\boldsymbol{q};\mu)\\
& \le
\mathcal{C}_s(\boldsymbol{p},\boldsymbol{q};\mu)-\mathcal{C}_t(\boldsymbol{p},\boldsymbol{q};\mu)\nonumber\\
& \le \mathcal{C}_s(\boldsymbol{p},\boldsymbol{q};\mu) \,,
\end{align*}
since $\mathcal{C}_s(\boldsymbol{p},\boldsymbol{q};\mu)\ge0$. Let then $\mathbb{P}^\perp:={\rm id}-\mathbb{P}$ and apply twice the triangle inequality, so to get
\begin{align}
  \mathcal{C}_s(\boldsymbol{p},\boldsymbol{q};\mu)&\leq \|\mathbb{P}^\perp\Lambda_s[\Delta_\mu(\boldsymbol{p},\boldsymbol{q})]\|_1 \nonumber\\
 &\le\,\mu\, \|\mathbb{P}^\perp\Lambda_s[\rho_{\boldsymbol{p}}]\|_1\,+\,(1-\mu)\,\|\mathbb{P}^\perp\Lambda_s[\rho_{\boldsymbol{q}}]\|_1\ .\label{coherencebound}
\end{align}
Noting that ${\normt{\ketbra{i}{j}}=1}$, with $\ket{i},\ket{j}$ belonging to the reference orthonormal basis, one applies the triangle inequality once again to the r.h.s. of~\eqref{coherencebound} so to obtain
\begin{align*}
     \Delta \mc{I}_{t,s}^{cl}(\boldsymbol{p},\boldsymbol{q};\mu) &
    \le\mu\, \|\mathbb{P}^\perp\Lambda_s[\rho_{\boldsymbol{p}}]\|_1\,+\,(1-\mu)\,\|\mathbb{P}^\perp\Lambda_s[\rho_{\boldsymbol{q}}]\|_1\ \nonumber \\
 &\le  \mu \ C_{\ell_1}(\Lambda_s[\rho_{\bs{p}}]) \, + \,(1-\mu)\ C_{\ell_1}(\Lambda_s[\rho_{\bs{q}}]) \,.
\end{align*}

\end{document}